\theoremstyle{plain}
\newtheorem{theorem}{Theorem}[section]
\newtheorem{proposition}[theorem]{Proposition}
\newtheorem{lemma}[theorem]{Lemma}
\theoremstyle{definition}
\newtheorem{definition}[theorem]{Definition}
\theoremstyle{remark}
\newcolumntype{C}{>{\centering\arraybackslash}X} 
  \newcommand{\alex}[2][]{\todo[color=green!50,#1]{\textsf{Alex:} #2}}
  \newcommand{\anti}[2][]{\todo[color=red!50,#1]{\textsf{Antigoni:} #2}}
    \newcommand{\alex}[2][]{}
    \newcommand{\anti}[2][]{}
\newcommand{\R}{\mathbb{R}}
\newcommand{\N}{\mathbb{N}}
\newcommand{\Z}{\mathbb{Z}}
\newcommand{\F}{\mathbb{F}}
\newcommand{\shr}[1]{[ #1 ]}
\newcommand{\packshr}[1]{[ #1 ]}
\newcommand{\block}[1]{\bm{#1}}
\newcommand{\batchpackshares}[2]{\block{#1}_{[1,\ell]}^{#2}}
\newcommand{\mat}[1]{\bm{#1}}
\newcommand{\sens}{\mathrm{sens}_\schema(\mat{C})}
\newcommand{\sensone}{\mathrm{sens}_\schema^1(\mat{C})}
\newcommand{\simplesens}{\mathrm{sens}_\schema}
\newcommand{\numRounds}{T^*}
\newcommand{\schema}{\Phi}
\newcommand{\pattern}{\phi}
\newcommand{\adjacencyDelta}{\mathfrak{D}}
\newcommand{\nbrs}{\mathtt{Nbrs}}
\newcommand{\numParties}{n}
\newcommand{\totalParties}{N}
\newcommand{\Party}{P}
\newcommand{\Committee}[1]{\mathcal{C}_{#1}}
\newcommand{\Dropouts}{\mathsf{Drop}}
\newcommand{\Func}{\mathcal{F}}
\newcommand{\Att}{\mathcal{A}}
\newcommand{\Sim}{\mathcal{S}}
\newcommand{\View}{\mathsf{View}}
\newcommand{\Real}{\mathsf{Real}}
\newcommand{\Ideal}{\mathsf{Ideal}}
\newcommand{\CorrSet}{\mathfrak{C}}
\newcommand{\Output}{\mathsf{Output}}
\newcommand{\SD}{\mathsf{SD}}
\newcommand{\Comm}{\mathsf{Comm}}
\newcommand{\Open}{\mathsf{Open}}
\newcommand{\Exp}{\mathbb{E}}
\newcommand{\Alg}{\mathcal{A}}
\newcounter{protocol}
\newenvironment{protocol}[1]
  {\par\addvspace{\topsep}
   \noindent
   \tabularx{\linewidth}{@{} X @{}}
    \\\hline
    \refstepcounter{protocol}\textbf{Protocol \theprotocol} #1 
    \\\hline}
  {\hrule
  }
\newcommand{\secparam}{{\lambda}}
\newcommand{\Agg}{{\sf Agg}}
\newcommand{\PFL}{{ \Pi_{\sf DMM}}}
\newcommand{\share}{{\sf Share }}
\newcommand{\reshare}{{\sf Reshare }}
\newcommand{\reconstruct}{{\sf Recons}}
\newcommand{\recover}{{\sf Recover }}
\newcommand{\calC}{{\cal C}}
\newcommand{\PSS}{{\sf LRP}}
\newcommand{\SecAgg}{{\sf SecAgg}}
\newcommand{\SecAggEnc}{{\sf SecAgg.Enc}}
\newcommand{\noisedist}{\mathcal{D}}
\begin{document}

\twocolumn[\icmltitle{DMM: Distributed Matrix Mechanism for Differentially-Private Federated Learning Based on Constant-Overhead Linear Secret Resharing}

\begin{icmlauthorlist}
\icmlauthor{Alexander Bienstock}{air}
\icmlauthor{Ujjwal Kumar}{jp}
\icmlauthor{Antigoni Polychroniadou}{air}
\end{icmlauthorlist}

\icmlaffiliation{air}{J.P.~Morgan AI Research \& J.P.~Morgan AlgoCRYPT CoE, New York, New York, USA}
\icmlaffiliation{jp}{J.P.~Morgan, Mumbai, India}

\icmlcorrespondingauthor{Alexander Bienstock}{alex.bienstock@jpmchase.com}

\vskip 0.3in
]

\printAffiliationsAndNotice{}


\iftoggle{abstract}{
\begin{abstract}
    Federated Learning (FL) has gained lots of traction recently, both in industry and academia.
    In FL, a machine learning model is trained using data from various end-users arranged in committees across several rounds.
    Since such data can often be sensitive, a primary challenge in FL is providing privacy while still retaining utility of the model.
    Differential Privacy (DP) has become the main measure of privacy in the FL setting.
    DP comes in two flavors: central and local.
    In the former, a centralized server is trusted to receive the users' raw gradients from a training step, and then perturb their aggregation with some noise before releasing the next version of the model.
    In the latter (more private) setting, noise is applied on users' local devices, and only the aggregation of users' noisy gradients is revealed even to the server.
    Great strides have been made in increasing the privacy-utility trade-off in the central DP setting, by utilizing the so-called \emph{matrix mechanism}.
    However, progress has been mostly stalled in the local DP setting.
    In this work, we introduce the \emph{distributed} matrix mechanism to achieve the best-of-both-worlds; local DP and also better privacy-utility trade-off from the matrix mechanism.
    We accomplish this by proposing a cryptographic protocol that securely transfers sensitive values across rounds\iftoggle{neurips}{, which makes use of \emph{packed secret sharing}.}{.}
    This protocol accommodates the dynamic participation of users per training round required by FL, including those that may drop out from the computation. 
    We provide experiments which show that our mechanism indeed significantly improves the accuracy for standard federated learning tasks by up to 14\% for given privacy levels compared to previous local DP mechanisms, with little added overhead.
\end{abstract}

In Federated Learning (FL), a machine learning model is trained using data from several end-users.
Since such data can often be sensitive, a key challenge in FL is maintaining utility of the trained models, while preserving privacy of the end-users.
FL has experienced an explosion of progress in recent years, both in industry and research.
In terms of use in practice, there have been numerous deployments of FL recently, such as
Google and Apple's privacy-preserving training of machine learning models for making word suggestions in their mobile keyboards~\cite{gboard,apple} and voice assistants~\cite{apple}.
In FL research, new solutions continue to be proposed with better privacy-utility tradeoffs and usability~\cite{treeFL,matrixFL,MultiEpochFL,BandedFL}.

In more detail, FL typically works in a round-based setting, wherein the current model parameters are sent to a set of clients who locally execute a step of Stochastic Gradient Descent on their own data to obtain gradients with respect to a loss function.
These gradients are then aggregated using different techniques to update the model parameters for the next round (e.g.,~\cite{FedAvg,PerFedAvg,FedProx}).
The main privacy metric for FL is differential privacy (DP)~\cite{DP}.
Roughly speaking, DP guarantees that with high probability, one cannot tell whether or not a user participated in a given FL execution.
There are two different notions of DP that can be considered.
In \emph{central} DP, there is a centralized server who receives the gradients from the clients in each round and then updates the model based on its own noisy aggregation of these gradients.
In this case, DP holds with respect to those to whom the server sends the updated models, but not the server itself.
In \emph{local} DP, there may still be a centralized server, however, the users utilize a Secure Aggregation~\cite{DDGFL,SecAgg,PolylogSecAgg,Flamingo} protocol to only release to the server a noisy aggregation of their gradients, and thus DP holds with respect to the server as well.

There has been tremendous progress recently in the area of central DP for FL~\cite{treeFL,matrixFL,MultiEpochFL,BandedFL}.
These works use a sophisticated set of techniques from the DP literature called the \emph{matrix mechanism}~\cite{TreeMech,ContinualDP,MatrixMech} to achieve excellent privacy-utility trade-offs.
Indeed, in this setting, since the central server receives all of the gradients in the clear and samples all noise on its own, it can \emph{correlate} the noise across rounds in a complex manner.
Intuitively, this means that noise can be re-used across rounds without being detected so that the cumulative noise across all rounds is lower compared to sampling new, fresh noise to hide the gradients in each round.

On the other hand, in the setting of local DP, the clients just add noise locally to their gradients, and then these noisy gradients are summed using a Secure Aggregation protocol~\cite{DDGFL,SecAgg,PolylogSecAgg,Flamingo}.
Since the noise is not correlated across epochs via the matrix mechanism like in the central DP setting, the privacy-utility trade-off of local DP is not as good as that of central DP thus far.

In this work, we propose a solution to achieve the ``best-of-both-worlds" of the central and local DP settings, called the \emph{Distributed Matrix Mechanism}.
We achieve privacy with respect to the central server as in the local DP setting, while using correlated noise to get privacy-utility trade-offs close to that of the central DP setting.
To facilitate this, we propose a new cryptographic protocol to efficiently reshare secret information from one committee of users to the next.
While naively resharing values from committee to committee would require a prohibitively expensive $n^2$ overhead, where $n$ is the number of users in each committee (for each secret, each party in a given committee has to reshare their share to all of the parties in the next committee), we use \emph{linear packed secret sharing}~\cite{PackedSS} in a clever way to reduce this overhead to $O(1)$.
As long as if certain corruption conditions are satisfied across all committees (e.g., each committee has $t<(1/2-\varepsilon)n$ corrupted parties), then privacy holds.
In the FL setting, the committees are the sets of users chosen in each training round and the secrets are the noise and gradients from users across many rounds.
With our protocol, we can instantiate the matrix mechanism in a distributed fashion and thus use noise correlated across epochs. 
We do so by taking appropriate linear combinations of shares corresponding to noise and gradients from many rounds, then reconstructing aggregrated gradients with (correlated) noise to the server using Secure Aggregation.

We furthermore maintain privacy even if the corrupted parties are allowed to act maliciously.
To achieve this, we use two main additional ingredients: 
(i) parity check matrices, with which we can catch corrupted parties who do not follow the protocol;
and (ii) random linear combinations, which allow us to perform such checks with low communication overhead.

Another important property that our protocol achieves is \emph{dropout tolerance}.
In FL, the gradients from end-users often comes from mobile devices, and therefore it may not be guaranteed that such users will stay online for the whole round, even if they are honest.
Thus, the protocol must not fail if some (honest) users drop out, while still being able to handle other corrupted users.

We show the utility and concrete costs of our techniques.
We get accuracy improvements of up to 14\% for given privacy levels for the standard Federated EMNIST~\cite{FEMNIST} learning task compared to the prior local DP solution.
Moreover, our techniques modestly increase the communication overhead by just 2.8x compared to the prior local DP solution which is based on Secure Aggregation, while achieving less than 4 seconds of computation per iteration.

}{
\begin{abstract}
    Federated Learning (FL)\@ solutions with central Differential Privacy (DP)\@ have seen large improvements in their utility in recent years arising from the \emph{matrix mechanism}, while FL solutions with distributed (more private) DP have lagged behind.
    In this work, we introduce the \emph{distributed} matrix mechanism to achieve the best-of-both-worlds; better privacy of distributed DP and better utility from the matrix mechanism.
    We accomplish this using a novel cryptographic protocol that securely transfers sensitive values across client committees of different training iterations with constant communication overhead. 
    This protocol accommodates the dynamic participation of users required by FL, including those that may drop out from the computation. 
    We provide experiments which show that our mechanism indeed significantly improves the utility of FL models compared to previous distributed DP mechanisms, with little added overhead.

\end{abstract}

\section{Introduction}
In Federated Learning (FL), a machine learning model is trained using data from several end-users/clients.
Since such data can often be sensitive, a key challenge in FL is maintaining utility of the trained models, while preserving privacy of the end-users.
FL has experienced an explosion of progress in recent years, both in industry and research.
\iftoggle{short}{}{In terms of use in practice, there have been numerous deployments of FL recently, such as
Google's and Apple's privacy-preserving training of machine learning models for making word suggestions in their mobile keyboards~\cite{gboard,apple} and voice assistants~\cite{apple}.
In FL research, new solutions continue to be proposed with better privacy-utility tradeoffs and usability, e.g.,~\cite{BandedFL,toep_single,toep_mult}.}

In more detail, in each training iteration of FL, typically a central server sends the current model parameters to a set of clients, which we call a \emph{committee}, who locally execute a step of Stochastic Gradient Descent on their own data to obtain gradients with respect to a loss function.
These gradients are then aggregated and sent to the central server using different techniques to update the model parameters for the next iteration (e.g.,~\cite{FedAvg,PerFedAvg,FedProx}).

The main privacy metric for FL is differential privacy (DP)~\cite{DP}.
Roughly speaking, DP guarantees that with high probability, one cannot tell whether a user participated in a given FL execution.
There are two different notions of DP that can be considered.
In \emph{central} DP, there is a centralized server who receives the aggregated gradients from the clients in each iteration (perhaps using a Secure Aggregation protocol~\cite{DDGFL,SecAgg,SASH,opa,lerna}) and then updates the model by adding its own DP noise to these aggregated gradients.
See the left side of Figure~\ref{fig:globalandlocal} for a flowchart illustrating the process. 
In this case, DP holds with respect to those to whom the server sends the updated models (assuming that the server did indeed add noise), but not the server itself.
In \emph{distributed} DP, there may still be a centralized server, however, the clients utilize a Secure Aggregation protocol to release to the server \emph{only} an aggregation of their gradients with their own DP noise already added in.
Thus, DP holds with respect to the server as well; in particular, the clients do not need to trust the server to add noise.
Indeed, distributed DP is important for particularly sensitive data that cannot be known by \emph{anyone} else and for which we cannot rely on a central server to protect.
\alex{add local figure if room?}\anti{you mean the citation to the figure?}


\begin{figure*}[t!]
        \begin{subfigure}{0.48\textwidth}
        \centering
        \includegraphics[scale=0.29]{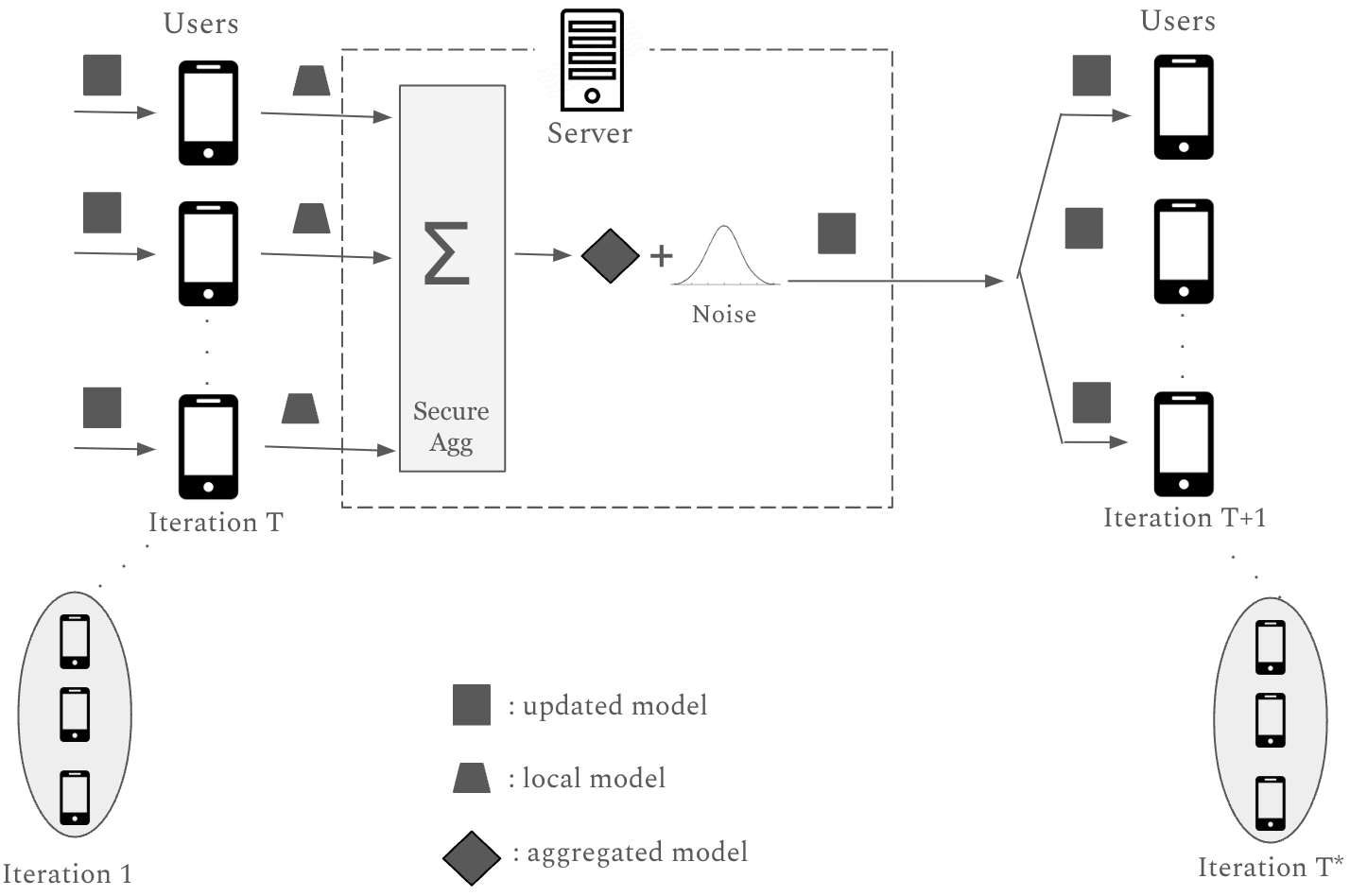}
        \end{subfigure}
        \hfill
        \rule{1pt}{4.5cm}
        \hfill
        \begin{subfigure}{0.48\textwidth}
        \centering
        \includegraphics[width=\textwidth]{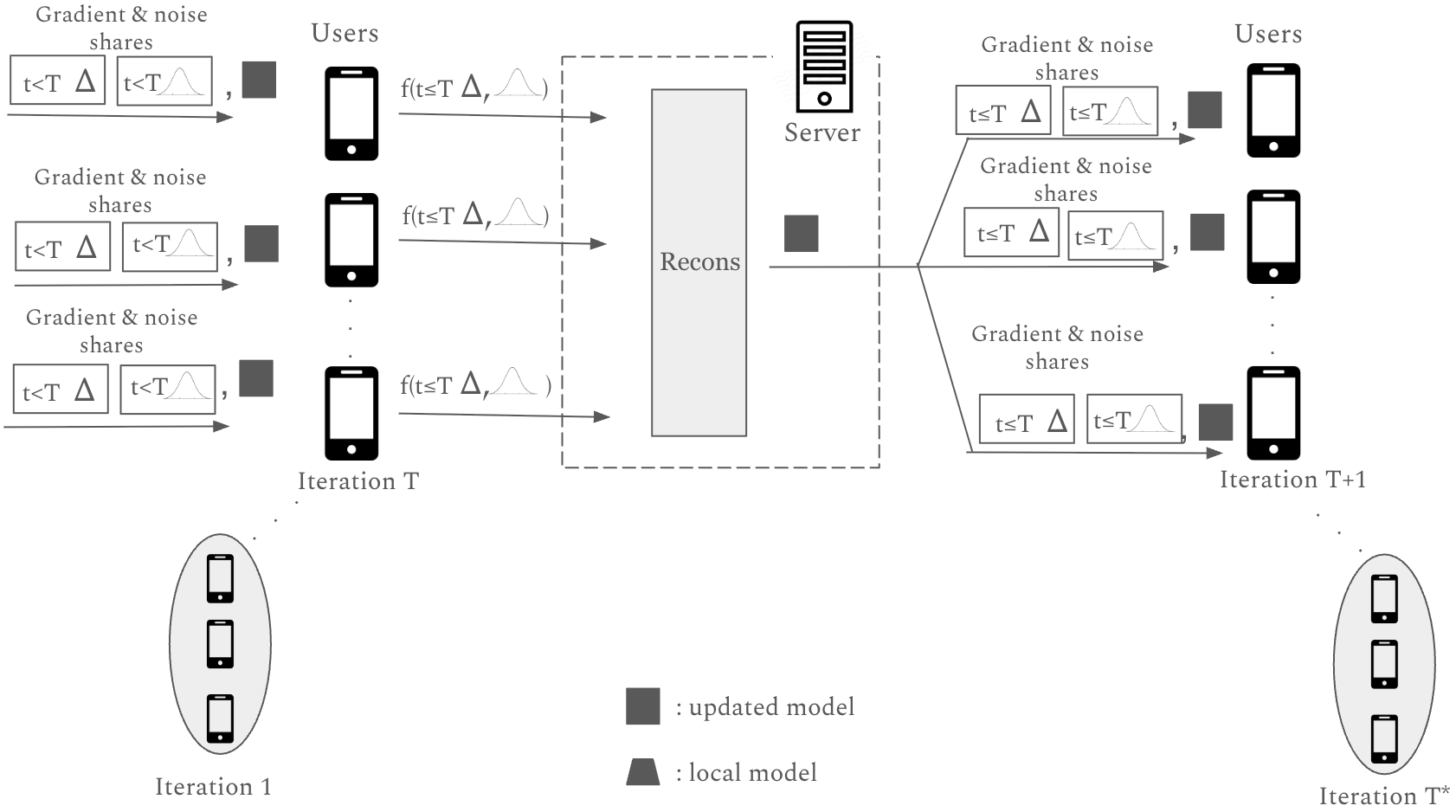}
        \end{subfigure}
        \caption{\textbf{Left}: FL in the central DP model. Users in iteration $T$ update the model locally and these updates are aggregated to the server. The server then adds noise itself before sending the updated model to committee $T+1$.
        \iftoggle{short}{}{DP holds with respect to the other users (assuming the server indeed adds noise), but not the server.}
        \textbf{Right}: FL based on DMM in the distributed DP model.
        Users in iteration $T$ receive noise and gradient shares from previous iterations.
        These parties combine the received shares with shares of their new gradients and freshly sampled noise via a linear combination, $f$, and send these combined shares to the server who uses them to reconstruct (only) the updated model. 
        Afterward, the gradient and noise shares are reshared to the parties in committee 
        $T+1$, ensuring continuity in DMM.
        \iftoggle{short}{}{DP holds with respect to both the other users and the server; moreover, the server is not trusted to add noise.}
        }
    \label{fig:globalandlocal}
    \iffalse}{
    \centering
    \includegraphics[scale=0.35]{figures/global.png}
    \caption{Federated Learning in the central DP model.}
    \label{fig:global}}\fi
\end{figure*}

There has been tremendous progress recently in the area of central DP for FL, e.g.,~\cite{BandedFL,toep_single,toep_mult}.
These works use a sophisticated set of techniques from the DP literature called the \emph{matrix mechanism}~\cite{TreeMech,ContinualDP} to achieve excellent privacy-utility trade-offs.
Indeed, in this setting, since the central server receives all of the gradients in the clear and samples all noise on its own, it can \emph{correlate} the noise across iterations in a complex manner.
Intuitively, this means that noise can be re-used across iterations so that the cumulative noise across all iterations is lower compared to sampling new, fresh noise to hide the gradients in each iteration.

On the other hand, in the setting of distributed DP, the clients just add noise locally to their gradients~\cite{DDGFL,Skell,poisson-bin}. 
Since clients change in each iteration, the noise cannot be correlated across epochs via the matrix mechanism like in the central DP setting, and so the privacy-utility trade-off of distributed DP pales in comparison to that of central DP thus far.

\textbf{Our Contributions.}
In this work, we propose a solution to achieve the ``best-of-both-worlds" of the central and distributed DP settings, 
called the \emph{Distributed Matrix Mechanism} (DMM).
We achieve privacy against the central server, i.e., distributed DP, while using correlated noise to get privacy-utility trade-offs matching the central DP setting.\footnote{We note that, just as in~\cite{DDGFL} and all other works using Secure Aggregation to obtain DP guarantees via aggregated noise, we actually obtain \emph{computational} DP~\cite{compDP}.}\alex{technically we do not experimentally verify this}\anti{arent we running also central DP,or u mean we compare only with the previous local?}

\textbf{1)} DMM starts with \emph{linear secret sharing}~\cite{shamirss}, a central technique in the cryptographic literature which has also been used for FL, e.g.,~\cite{Flamingo,ss-fl-eg-1,ss-fl-eg-2}.
Secret sharing allows for a \emph{dealer} party to distribute to $n$ parties different \emph{shares} of some secret $x$, such that any $t_c$ (corrupted) parties cannot learn anything about $x$ from all of their shares, while any $t_c+k$ parties, for some $k>0$ can use their shares to \emph{reconstruct} $x$.
These shares are also \emph{linear}, meaning that if the users have shares of $x_1$ and $x_2$, they can add their shares together to obtain a sharing of $x_3=x_1+x_2$. 

Typically in FL, secret sharing is used by a single set of parties to secret share (noisy) data to the other parties in the set. 
In our setting, however, 
we additionally need the noise and gradients from users in a given committee to somehow be \emph{reshared} to users in future committees.
Thus, we develop new techniques in this paper to build our \emph{constant-overhead linear secret resharing protocol}, $\PSS$. 
Indeed, our new techniques are paramount, since the naive way to perform such resharing costs $n^2$ communication per secret, instead of $O(1)$ per secret, where $n$ is the number of parties in each committee.
We can see from Table~\ref{tab:soeff} that this results in communication as low as $25.1$ MB per client using our new techniques, and infeasible communication as high as $2.13$ TB per client using the naive resharing.
See Section~\ref{sec:pss} for details on the naive secret resharing protocol and our $\PSS$ protocol with constant communication overhead.

\textbf{2)} Given $\PSS$, we can instantiate the matrix mechanism in a distributed fashion to obtain DMM:
First, the parties take linear combinations of the secret shared gradients and noise, thus introducing noise correlations across epochs.
Then, the parties can reconstruct these aggregrated gradients with (correlated) noise to the server. 
Finally, users (re)share the gradients and noise using $\PSS$.
See the right side of Figure~\ref{fig:globalandlocal} for a flowchart illustrating our approach.

DMM is detailed in Section~\ref{sec:dist-mat-mech}.
Importantly, DMM maintains DP even in the presence of corrupted parties who might manipulate their shares of the gradients and noise.
\iftoggle{short}{}{Indeed, we show that these corrupted parties can only add some independent value $\chi$ to each aggregated noisy gradient received by the server, which can be viewed as a form of post-processing that does not compromise DP.}
Moreover, DMM achieves \emph{dropout tolerance}:
In FL, the gradients from end-users often come from mobile devices, and therefore it may not be guaranteed that such users will stay online for the whole training iteration, even if they are honest.
Thus, the protocol must not fail if some (honest) users drop out\iftoggle{short}{.}{, while still being able to handle other corrupted users.
We design our protocol in a way to be able to still work even if a certain fraction of users drop out in each committee.} 

\textbf{3)} We implement the Distributed Matrix Mechanism using our resharing protocol and empirically test its efficacy in training differentially private FL models.
For example, in Figure~\ref{fig:priv-acc-so}, we show that for Stack Overflow Next Word Prediction~\cite{SO}, our approach improves upon the privacy-utility tradeoff of the most accurate prior distributed DP approach, the Distributed Discrete Gaussian (DDG) Mechanism~\cite{DDGFL}, and matches that of the best central DP approach~\cite{BandedFL}.
We show similar results in Figure~\ref{fig:priv-acc} for Federated EMNIST~\cite{FEMNIST}.
It can also be observed in Tables~\ref{tab:soeff} and~\ref{tab:eff} that our solution is lightweight.
Indeed, DMM adds less than 10 seconds of computation and in some cases less than $2$ MB of communication per client compared to the prior distributed DP approach of secure aggregation.

\begin{table}[]
  \centering
  \footnotesize
  \begin{tabularx}{\iftoggle{neurips}{0.724\columnwidth}{1.01\columnwidth}}{>{\hsize=1.4\hsize}X| p{0.9cm} p{1.0cm} | p{1.09cm} p{1.25cm} p{1.09cm}} 
    \arrayrulecolor{gray!20}\hline
    \rowcolor{gray!20}  & \textbf{$\PSS$ Comp.} & \textbf{$\SecAgg$ Comp.} & \textbf{$\PSS$ Comm.} & \textbf{Naive SR Comm.} & \textbf{$\SecAgg$ Comm.}  \\
    \arrayrulecolor{black}\hline
    Opt. &  7.69 s & 61.3 ms & 4.68 GB & 2.13 TB & 16.2 MB \\
    \hline
    Hon. &  412 ms & 61.3 ms
    &  25.1 MB & 11.4 GB & 16.2 MB \\
  \end{tabularx}
  \caption{Client computation and communication of our $\PSS$ resharing protocol, naive secret resharing, and $\SecAgg$  per training iteration on Stack Overflow Next Word Prediction for committee size $\numParties = 64$.
  We give results for both the optimal~\cite{BandedFL} and more efficient Honaker online~\cite{treeFL,honaker} matrix mechanisms.
  $\SecAgg$ is the bottleneck of prior distributed DP approaches~\cite{DDGFL} and $\PSS$ (as opposed to naive secret resharing) is the bottleneck of DMM.
  (ms $\coloneqq$ milliseconds; s $\coloneqq$ seconds).
  }
    \label{tab:soeff}
\end{table}

\textbf{Related Work.}
In concurrent work,~\cite{EPRINT:BBGKOX24} take another approach to our problem, without secret sharing.
They instead 
separately maintain aggregate noise and gradient encryptions across iterations using (linearly homomorphic) encryption.
These encryptions are then added together by the server and decrypted using clients' noisy secret keys (that do not reveal the actual secret keys) in a clever fasion to reveal \emph{only} the sums with correlated noise in each iteration.
They also sketch a solution with dropout resilience.
Ball et al.~do not provide any code or straightforward method for calculating communication costs; however, we expect their communication complexity to be better than ours.
Yet, since they rely on computational assumptions for linearly homomorphic encryption, whereas we just use information-theoretic secret sharing techinques, we expect ours to be faster.
Moreover, Ball et al.~do not have a maliciously secure protocol.

\begin{figure}[t]
	\centering
	\includegraphics[width=\linewidth]{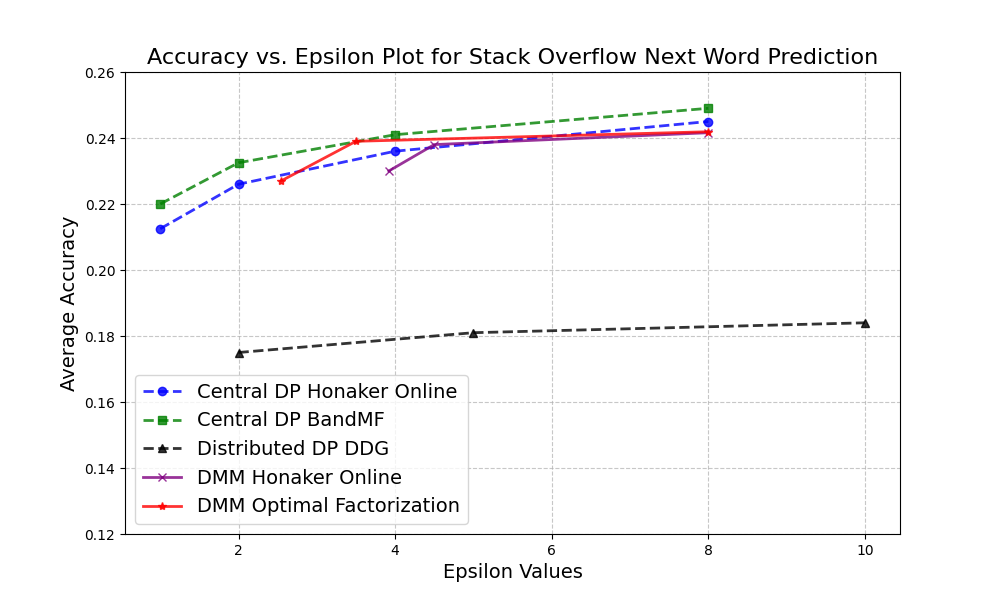}
	\caption{Test accuracies on Stack Overflow Next Word Prediction across different privacy levels $\varepsilon$ for the distributed DP DDG mechanism~\cite{DDGFL}, the central DP BandMF and Honaker online mechanisms~\cite{BandedFL}, and our distributed DP DMM instantiated with the optimal~\cite{BandedFL} and Honaker online~\cite{honaker} matrix factorizations.
		DMM performs 5-6 percentage points better than the prior distributed DP approach and similar (sometimes better) to the prior central DP approaches.
		We use $\delta=1/N$ for $(\varepsilon,\delta)$-DP, where $N$ is the total number of clients selected across training.
	}
	\label{fig:priv-acc-so}
\end{figure}

A fruitful line of works has used correlated noise, and in particular, the matrix mechanism to improve the privacy-utility tradeoff and memory costs of (central) DP FL for increasingly realistic multi-participation settings~\cite{treeFL,matrixFL,MultiEpochFL,BandedFL, toep_single, toep_mult}. 
Privacy amplification techniques like shuffling~\cite{sodaShuffling,focsShuffling} or (Poisson) subsampling~\cite{privacyamp1,poissonZhu,poissonWang} are sometimes used to increase privacy-utility tradeoffs; however, these require strong assumptions on how data is processed which are often not suitable for FL in practice and thus should be avoided~\cite{treeFL}.

New distributed DP mechanisms for FL, the Skellam Mechanism~\cite{Skell}, and Distributed Mean Estimation (DME), the Poisson Binomial Mechanism (PBM)~\cite{poisson-bin}, have appeared in the literature recently, mostly improving the efficiency of DDG.
Indeed, in these works, it is shown that roughly the same privacy-utility tradeoff as DDG is acheived by the Skellam Mechanism, while PBM is not compared empirically to DDG, nor are FL experiments with PBM provided (though they state that their asymptotic error for DME is the same as DDG).
We therefore refer to~\cite{DDGFL} as the state-of-the-art for privacy-utility tradeoff.
Furthermore, PBM specifically does not seem suited to our techniques, since it departs from the \emph{additive noise} paradigm.

Several works have considered so-called \emph{proactive secret sharing}~\cite{proactive,proactiveconstant,churp}.
This setting is similar to ours in which secrets are reshared, however, there the users stay the same in each iteration; just the users that are corrupted changes in each iteration. 
Papers that study a similar model to ours exist, but for more general computations than the special case of aggregation and without a central server that minimizes interaction between clients, and thus are inefficient~\cite{yoso,fluidbienstock,fluidchoudhuri,fluidrachuri,fluidperfect}. \anti{isnt it enough to say that our setting is different and harder, single server setting is harder, why did you say they are inefficient? do you mean inefficnet in the sense that in their setting a secure aggregation protocol will be more efficient?}



\section{Preliminaries}{\label{sec:prelims}}

\subsection{Differentially Private Federated Learning}\label{sec:DPFL-prelim}
In this section, we define some notions important to DPFL. 
Let $\numRounds$ be the number of training iterations, $\numParties$ the number of parties in each committee, and $d$ model dimension.

\textbf{Adjacency and Participation Schemas.}
DP requires a notion of adjacent datasets.
Two data streams $\mat{X}$ and $\tilde{\mat{X}}$ are adjacent if the data associated with any single user is altered, in every iteration in which the user participates.\footnote{We study the more general user-level DP in this work, as opposed to example-level DP.} 
The pattern of when this user participates does not change in these two adjacent streams.
A \emph{participation schema} $\schema$ contains all possible \emph{participation patterns} $\pattern\in\schema$, with each $\pattern\subseteq[\numRounds]$ indicating a set of iterations in which a single user participates.
Let $\nbrs$ be the set of all pairs of neighboring streams $\mat{X}$ and $\adjacencyDelta=\{\mat{X}-\tilde{\mat{X}} : (\mat{X},\tilde{\mat{X}})\in\nbrs\}$ represent the set of all possible differences between neighboring $\mat{X},\tilde{\mat{X}}$.
We say a $\adjacencyDelta$ satisfies the participation schema $\schema$ if the indices of all nonzero rows in each $\R^{\numRounds\times d}$ matrix $\mat{U}\in\adjacencyDelta$ are a subset of some $\pattern\in\schema$.
\iftoggle{short}{}{In this work, we consider the \emph{$b$-min-sep-participation} schema of~\cite{BandedFL}, where any adjacent participations are at least $b$ steps apart.}

\textbf{Centralized DP Matrix Mechanism.}
Let $\mat{A}\in\R^{\numRounds\times \numRounds}$ be an appropriate linear query workload (e.g., prefix sums\iftoggle{short}{)}{ or a matrix encoding of stochastic gradient descent with momentum (SGDM)~\cite{matrixFL})} that is publicly known to all participants.
Matrix mechanisms in the central DP setting use a factorization $\mat{A}=\mat{B}\mat{C}$ to privately estimate the quantity $\mat{A}\mat{X}$ as
$\widehat{\mat{A}\mat{X}}=\mat{B}(\mat{C}\mat{X}+\block{Z}),$
where $\block{Z}$ is sampled by the central server from some noise distribution.

Each entry of the vector $\widehat{\mat{A}\mat{X}}$ corresponds to a model iteration that is released.
The matrix $\mat{A}$ is lower-diagonal, which means that the $T$-th entry of $\widehat{\mat{A}\block{X}}$ only depends on the first $T$ entries of $\block{X}$, for each dimension.
Additionally, the $T$-th entry of $\widehat{\mat{A}\block{X}}$ depends on the first $T$ entries of $\block{Z}$, which means that the noise used in each released model iteration is \emph{correlated}.
\iftoggle{short}{}{This means that each sampled noise element can have \emph{less variance}, resulting in \emph{better accuracy}.}

We now define the \emph{sensitivity} of the central DP matrix mechanism for a particular participation schema $\schema$ with set of neighboring streams $\nbrs$ as 
    $\sens=\sup_{(\mat{X},\tilde{\mat{X}})\in\nbrs} ||\mat{C}\mat{X}-\mat{C}\tilde{\mat{X}}||_F = \sup_{\mat{U}\in\adjacencyDelta} ||\mat{C}\mat{U}||_F$.\footnote{$||\cdot||_F$ is the Frobenius norm.}
As in previous works, it is useful to analyze $\sens$ when all of the contributions from users are clipped to $\ell_2$ norm at most $c=1$, noting that the actual value of $\sens$ scales with $c$ in general.
In our work, however, it is useful to explicitly define the sensitivity for contributions of $\ell_2$ norm $c=1$ as $\sensone$.
The expected total squared error on $\mat{A}$ is typically given as $\mathcal{L}(\mat{B},\mat{C})=\sens ||\mat{B}||_F^2$ and the goal is to find a factorization that minimizes this loss.

\subsection{Problem Statement and Security Model}
For each iteration $T\in[\numRounds]$, we have a committee of (different) clients $\Committee{T}$.
The clients in this committee receive the current model parameters $\theta$ from the server and some values (secret shares)  from the previous committee $\Committee{T-1}$.
Each client $\Party_{T,i}$ uses $\theta$ and their private data to obtain gradients $\block{g}_{T,i}$ and also samples noise $\block{z}_{T,i}$ from some distribution $\noisedist$.
These clients then interact with each other and the server with the goal of revealing \emph{only} $\widehat{\mat{A}\mat{X}}_T=\mat{A}_{[T:,]}\mat{X}+\mat{B}_{[T:,]}\block{Z}$ to the server, where each entry $\block{X}_T = \sum_{i=1}^n \block{g}_{_{T, i}}$ and $\block{Z}_T=\sum_{i=1}^n \block{z}_{_{T, i}}$ for $T\in[\numRounds]$.
We allow $t_c$ clients per iteration as well as the server to be \emph{corrupted} by an adversary $\Att$; i.e., $\Att$ can use the values sent to the corrupted parties to try to learn anything besides each $\widehat{\mat{A}\mat{X}}_T$.
In fact, we handle \emph{malicious} adversaries that can send arbitrary values to other parties.
We allow such adversaries to change each $\widehat{\mat{A}\mat{X}}_T$ received by the server by some additive $\chi_T$ factors that are \emph{independent} of $\block{g}_{_{T, i}}, \block{z}_{_{T, i}}$ for $T\in[\numRounds]$ of the other clients; thus preserving DP.
We also allow $t_d$ honest clients per iteration to drop out; in this case the correct $\widehat{\mat{A}\mat{X}}_T$ should still be received by the server (with added $\chi_T$ defined by the adversary).
We require $t_d+t_c < (1/2-\mu) n$, for constant $0<\mu<1/2$, to guarantee security.
In Section~\ref{sec:reshare_sec}, we formalize this model using a standard simulation-style definition~\cite{GoldBook} and show that such adversaries cannot learn anything besides $\widehat{\mat{A}\mat{X}}$.

\subsection{(Packed) Secret Sharing}\label{sec:packedss}
Let $\F$ be a finite field. 
Recall $t_c$ is the number of maliciously corrupted parties in each committee.
A ($t_c+1$)-out-of-$\numParties$ secret sharing scheme takes as input a secret $z$ from $\F$ and outputs $\numParties$ shares, one for each party, with the property that it is possible to efficiently recover $z$ from every subset of $t_c + 1$ shares, but every subset of at most $t_c$ shares reveals nothing about the secret $z$. \iftoggle{short}{}{The value $t_c$ is called the privacy threshold of the scheme.}

A secret sharing scheme consists of two algorithms: the first, $\share$, takes as input the secret $z$ and the parameters $\numParties$ and $t_c$, and outputs $\numParties$ shares: $(z^1,\dots,z^n)=\share(z, \numParties, t_c)$.
We denote the vector of shares as $\shr{z}_{t_c} = (z^1,\dots,z^n)$.
The second algorithm, $\reconstruct$, takes as input a set of reconstructing parties $\Gamma\subseteq[n]$ and share $z^i$ and outputs a reconstruction value $\reconstruct(\Gamma, z^i)$.
We will utilize secret sharing schemes in which $\lambda_i \cdot z^i=\reconstruct(\Gamma, z^i)$, for some constant $\lambda_i$ dependent on $i$ and $\Gamma$.
If $|\Gamma|\geq t_c+1$, then these reconstruction values can be simply summed to obtain $z=\sum_i\lambda_i\cdot z^i$.
\iftoggle{short}{}{It is required that such a reconstruction of shares generated from a value $z$ reconstructs to the same value $z$.}
The secret sharing scheme we use is also \emph{linear}, meaning that if the parties compute $\shr{z_1}_{t_c}+\shr{z_2}_{t_c}$, then invoke $\reconstruct$ to get reconstruction values $\lambda_i \cdot (z_1^i+z_2^i)$ for a big enough set $\Gamma$ of parties, summing these reconstruction values will yield $z_1+z_2=\sum_i\lambda_i (z_1^i+z_2^i)$.
One instantiation of secret sharing uses a random degree $t_c$ polynomial $f(x)$, where the secret is stored at $f(0)$ and the share of each $\Party_i$ is $f(i)$~\cite{shamirss}.
Reconstruction uses polynomial interpolation, where the $\lambda_i$ are Lagrange coefficients.

\begin{figure*}[t!]
    \centering
    \iftoggle{neurips}{\includegraphics[scale=0.4]{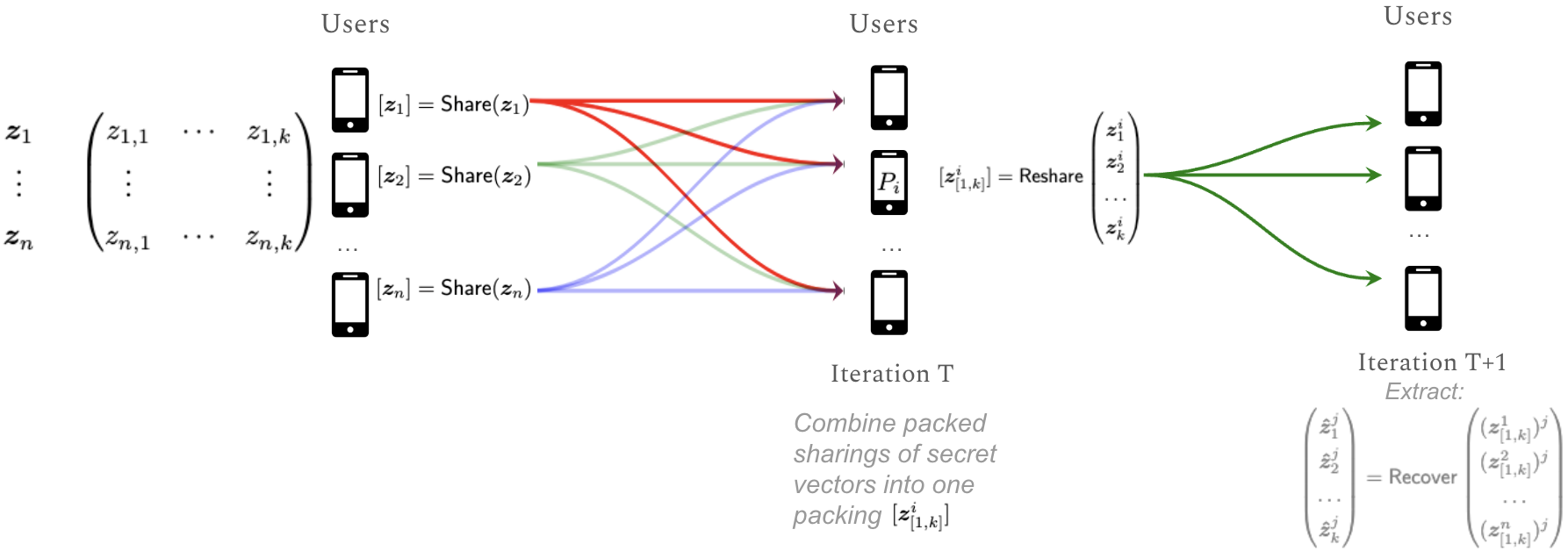}}{\includegraphics[scale=0.52]{figures/sharing_new.png}}
    \caption{Constant-Overhead Linear Secret Resharing Protocol, $\PSS$.
    At a high level, parties $\Party_i$ in iteration $T$ each receive packed secret sharings $\packshr{\block{z}_i}$.
    Then, the parties of iteration $T$ reshare these packed shares by distributing packed sharings \emph{of these packed shares} to the parties of iteration $T+1$, who finally recover packed shares of the original secret vectors $\block{z}_1,\dots,\block{z}_n$.}
    \label{fig:sharing}
\end{figure*}

Packed secret sharing is an extension of secret sharing, where a secret vector $\block{z}=(z_1,\dots,z_k)\in\F^k$ is \emph{packed} into a single set of (individual) shares. 
We call $k$ the \emph{packing parameter}.
\iftoggle{short}{}{This technique is particularly useful for efficiency in cryptographic protocols, as it allows multiple secrets to be shared and reconstructed simultaneously with reduced overhead compared to handling each secret individually.}
We still have that every subset of at most $t_c$ shares reveals nothing about $\block{z}$, but we need at least $t_c+k$ shares to be able to recover $\block{z}$.
There are also similar $\share$ and $\reconstruct$ algorithms, and we denote a sharing of some vector $\block{z}$ as $\packshr{\block{z}}_{t_c+k-1} = (\block{z}^1,\dots,\block{z}^n)$.
In addition, $\reconstruct$ takes as input an index $j\in[k]$ representing the index of the vector to reconstruct.
We utilize packed secret sharing schemes in which $\lambda_i^j \cdot \block{z}^i = \reconstruct(\Gamma, \block{z}^i,j)$, for some constant $\lambda_i^j$. 
If $|\Gamma|\geq t_c + k$, then $z_j$ can be computed as $z_j=\sum_i\lambda_i^j\cdot \block{z}^i$.
The packed secret sharing scheme we use is also \emph{linear} with respect to vector addition of the underlying secrets. 
One instantiation of packed secret sharing extends the polynomial idea from above---$f(x)$ is now degree $t_c+k-1$ and each secret $z_j$ is stored at $f(-j)$;
everything else stays the same~\cite{PackedSS}.

In the following, $t_c$ and $k$ will be fixed, so we will simply refer to packed secret sharings as $\packshr{\block{z}}$.

\section{Linear Secret Resharing Protocol}\label{sec:pss}

In this section, we present our constant-overhead linear secret resharing protocol, $\PSS$.

\textbf{Naive $n^2$-Overhead Protocol.}
We start with the naive $n^2$-overhead protocol, which follows from classical cryptographic literature~\cite{BGW}.
Let it be the case that a secret sharing $\shr{z}$ has been generated for parties in a given committee.
To reshare this value to the parties of the next committee, each party $\Party_i$ in this committee distributes to them a sharing $\shr{z^i}$ of their share.
Since we know it is the case that $z = \sum_i \lambda_i \cdot z^i = \sum_i \reconstruct(\Gamma,z^i)$ and the secret sharing is linear, the parties in the next committee can simply compute their new sharing of $z$ to be $\shr{z'} = \sum_i \lambda_i \cdot \shr{z^i}$, and it is clear that $z' = z$ (the parties of this second committee must also know $\Gamma$, the subset of clients who did not drop out in the first committee).
The problem with this protocol is of course that it has $n^2$ total communication overhead---each of $n$ parties has to distribute $n$ shares to the next committee.

\textbf{Our Constant-Overhead Protocol.}
We can instead start by using packed secret sharing.
Our resharing protocol is pictorially presented and summarized in Figure~\ref{fig:sharing}.
It works by cleverly batching across many packed sharings.
Our resharing protocol consists of four algorithms: it inherits the first algorithm $\share$ from an underlying linear packed secret sharing scheme.
Now, let it be the case that $k$ packed secret sharings $\packshr{\block{z}_1},\dots,\packshr{\block{z}_k}$, for length-$k$ secret vectors $\block{z}_1,\dots,\block{z}_k\in\F^k$, are distributed to the $n$ parties of iteration $T$ (so there are $k^2$ total secrets).
The next algorithm, called the \emph{resharing algorithm}, $\reshare$, takes as input the $k$ packed shares of party $\Party_i$ of iteration $T$, which we denote as the vector $\block{z}^i_{[1,k]}=(\block{z}_1^i,\dots,\block{z}_{k}^i)$, 
and outputs $n$ fresh shares of this vector to the parties of iteration $T+1$: $\packshr{\block{z}^i_{[1,k]}}=((\block{z}^i_{[1,k]})^1,\dots,(\block{z}^i_{[1,k]})^n)=\reshare(\block{z}^i_{[1,k]})$.
Next, the \emph{recovery algorithm}, $\recover$, takes as input the set of dropout parties $\Dropouts_T$ of iteration $T$ and the reshared shares of non-dropout parties of iteration $T$ sent to party $\Party_j$ of iteration $T+1$, $(\block{z}^{i_1}_{[1,k]})^j,\dots,(\block{z}^{i_{\tilde{n}}}_{[1,k]})^j$, for $i_1,\dots,i_{\tilde{n}}\in[n]\backslash\Dropouts_T$, and outputs new shares of the original secret vectors $\block{z}_1,\dots,\block{z}_k$ for party $\Party_j$: $(\hat{\block{z}}_1^j,\dots,\hat{\block{z}}_k^j)=\recover(\Dropouts_T, (\block{z}^{i_1}_{[1,k]})^j,\dots,(\block{z}^{i_{\tilde{n}}}_{[1,k]})^j)$.\footnote{Note: the output shares are for length-$k$ secret vectors $({z}_{1,m},\dots,{z}_{k,m})$ for each $m\in[k]$, instead of $({z}_{\ell,1},\dots,{z}_{\ell,k})$, for each $\ell\in[k]$.} 
The last algorithm $\reconstruct$ is also inherited from the underlying linear packed secret sharing scheme.

We present protocol $\PSS$ below. 
\begin{itemize}[nosep]
    \item $\reshare(\block{z}_{[1,k]}^i)$:\quad Outputs $\packshr{\block{z}_{[1,k]}^i}=\share(\block{z}_{[1,k]}^i)$.
    \item $\recover(\Dropouts, (\block{z}^{i_1}_{[1,k]})^j,\dots,(\block{z}^{i_{\tilde{n}}}_{[1,k]})^j)$:\quad Computes $\hat{\block{z}}_m^j = \sum_l \reconstruct([n]\backslash\Dropouts, (\block{z}^{i_l}_{[1,k]})^j,m)$, for $m\in[k]$. Then outputs $(\hat{\block{z}}_1^j,\dots \hat{\block{z}}_k^j)$
\end{itemize}

Now we observe how $\recover(\cdot)$ outputs packed shares of the original secrets.
Recall that $\reconstruct([n]\backslash\Dropouts, (\block{z}^{i_l}_{[1,k]})^j,m) = \lambda_{i_l}^m\cdot (\block{z}^{i_l}_{[1,k]})^j$, so we can re-write $\hat{\block{z}}_m^j = \sum_l \lambda_{i_l}^m\cdot (\block{z}^{i_l}_{[1,k]})^j $.
Moreover, each $(\block{z}^{i_l}_{[1,k]})^j$ is a share of vector $\block{z}^{i_l}_{[1,k]}=(\block{z}_1^{i_l},\dots,\block{z}_{k}^{i_l})$ for a \emph{linear} packed secret sharing scheme.
Thus, we are computing new packed shares of the vectors $\sum_l \lambda_i^m\cdot (\block{z}_1^{i_l},\dots,\block{z}_{k}^{i_l})$. 
Each $\block{z}_\ell^{i_l}$ was itself $\Party_{i_l}$'s share of vector $\block{z}_\ell$.
Thus the packed shares we are computing indeed share the vectors:
\begin{align*}
    \sum_l & \lambda_{i_l}^m \cdot (\block{z}_1^{i_l},\dots,\block{z}_{k}^{i_l}) \\ & = (\sum_l\reconstruct([n]\backslash\Dropouts, \block{z}_1^{i_l},m),\dots,\\
    & \qquad\qquad\sum_l\reconstruct([n]\backslash\Dropouts, \block{z}_k^{i_l},m)) \\ & = (\block{z}_{1,m},\dots,\block{z}_{k,m}).
\end{align*}

\textbf{Security.} It is clear that the output of $\reshare()$ reveals nothing to the $t_c$ corrupted parties, since it just uses $\share()$ of the underlying packed secret sharing scheme, that is secure against $t_c$ corrupted parties.
Since the number of honest parties that do not dropout is at least $\numParties-t_d -t_c > (1/2+\mu)\numParties$, it is clear that this protocol is resilient to the $t_d$ (honest) dropout parties, if $k\leq 2\mu \numParties$.
This is because $t_c+k\leq (1/2+\mu)\numParties<n-t_d-t_c$, so the shares of the parties that do not dropout can still be used to reconstruct in the secret space during $\recover$.
The fact that malicious parties can only cause reconstructed values to be perturbed by an independent value $\chi$ follows from standard facts about secret sharing~\cite{add-attack}.
We formally prove the security and dropout-resiliency of $\PSS$ in Section~\ref{sec:reshare_sec}.


\textbf{Communication Complexity.}
Let $k=2\mu\numParties$.
The communication cost of both $\share$ and $\reconstruct$ is $\numParties$ field elements for $k$ secrets, which is $1/2\mu$ per secret.
The total communication complexity of $\reshare$ is $\numParties^2$ field elements---each party sends a share to every party in the next iteration.
This is for $k^2=4\mu^2\numParties^2$ secrets, which is $1/4\mu^2$ per secret.
Thus, the communication is at most $1/4\mu^2$ per secret $(\mu < 1/2)$.

\section{Distributed Matrix Mechanism}
\label{sec:dist-mat-mech}

\begin{table*}[t!]
\iftoggle{neurips}{\footnotesize}{}
\begin{protocol}{Differentially-Private Federated Learning Protocol $\PFL$\label{prot:ppfl}}
\textbf{Subprotocols}: $\PSS=(\share,\reshare,\reconstruct,\recover)$ is a secret resharing protocol (See Section~\ref{sec:pss}). 

\textbf{Parameters:} Packing parameter $k\in\N$; number of iterations $\numRounds$; 
finite field $\F$ of bit-width $m$; matrix 
$\mat{A}\in\R^{\numRounds\times\numRounds}$ and $\mat{B},\mat{C}$ such that $\mat{A}=\mat{B}\mat{C}$;
noise distribution $\noisedist$.

\textbf{Inputs:} Current iteration $T$; 
gradients $\block{g}_{T,j,1},\dots,\block{g}_{T,j,k}\in\F^k$ \iftoggle{short}{}{(clipped, scaled, flattened, and rounded/discretized; as in~\cite{DDGFL}; details of this are provided in Section~\ref{sec:discret})}; 
list of dropped clients from iteration $T-1$, $\Dropouts_{T-1}$ (If $|\Dropouts_{T-1}|< t_c+k$, then abort);
reshared gradients and noise received from $\Committee{T-1}$ for the first $T-1$ iterations $\{\shr{\block{X}_{\tau,[1,k]}^{i_l}},\shr{\block{Z}_{\tau,[1,k]}^{i_l}}\}_{\tau\in[T-1],i_l\in[n]\backslash\Dropouts_{T-1}}$. \\

\underline{Round 1:}

\smallskip
\quad\textbf{Parties $\Party_j$:} 
  
\begin{itemize}[noitemsep, topsep=1pt]


\item Sample noise vectors $\block{z}_{_{T,j,1}},\dots,\block{z}_{_{T,j,k}}\in\F^k$ from $\noisedist$. 
\item For each $\ell\in[k]$, distribute packed secret sharings $\packshr{\block{z}_{_{T,j,\ell}}}=\share({\block{z}_{_{T,j,\ell}}})$ and $\packshr{\block{g}_{_{T,j,\ell}}}=\share({\block{g}_{_{T,j,\ell}}})$ to the set $\calC_{T}$ of clients of this training iteration (via authenticated and encrypted channels through the server). 
\end{itemize}

\quad\textbf{Server}:
\begin{itemize}[noitemsep,topsep=1pt]
	\item Receive from each Party $\Party_j$ in $\calC_{T}$ and register dropped clients in list $\Dropouts_T$.
	\item Forward (encrypted) shares $\packshr{\block{z}_{_{T,j,\ell}}}$ and $\packshr{\block{g}_{_{T,j,\ell}}}$ to the other clients of $\calC_{T}$, along with $\Dropouts_T$.
\end{itemize}

\smallskip
\underline{Round 2:}

\smallskip
\quad\textbf{Parties $\Party_j$:} 
\begin{itemize}[noitemsep, topsep=1pt]
\item Receive from server the list of dropped clients from iteration $T$, $\Dropouts_T$.
\item For each $\ell\in[k]$, aggregate $\packshr{\hat{\block{Z}}_{_{T,\ell}}}=(\sum_{\eta\in[\numParties]\setminus\Dropouts_T}\packshr{\block{z}_{_{T,\eta,\ell}}})$ and $\packshr{\hat{\block{X}}_{T,\ell}}=(\sum_{\eta\in[\numParties]\setminus\Dropouts_T}\packshr{\block{g}_{_{T,\eta,\ell}}})$; then reshare $\packshr{\block{Z}_{T,[1,k]}^j}=\reshare(\hat{\block{Z}}_{T,[1,k]}^j)$ and $\packshr{\block{X}_{T,[1,k]}^j}=\reshare(\hat{\block{X}}_{T,[1,k]}^j)$ to the set of clients in $\calC_{T+1}$ (via authenticated and encrypted channels through the sever).
\item{\bf If $T=1$}:
\begin{itemize}[noitemsep]
    \item For each $\ell\in[k]$, compute $\packshr{\block{Y}_{1,\ell}} = \mat{A}_{[1,1]}\cdot \packshr{\hat{\block{X}}_{1,\ell}}+\mat{B}_{[1,1]}\cdot \packshr{\hat{\block{Z}}_{1,\ell}}$, then send shares $\block{Y}_{1,\ell}^j$ to the server. 
\end{itemize}

\item{\bf If $T>1$}:
\begin{itemize}[noitemsep]
    \item For $\tau\in[T-1]$, recover $(\hat{\block{Z}}_{\tau,1}^j,\dots ,\hat{\block{Z}}_{\tau,k}^j) =\recover(\Dropouts_{T-1}, (\block{Z}_{\tau,[1,k]}^{i_1})^j,\dots, (\block{Z}_{\tau,[1,k]}^{i_{\tilde{n}}})^j)$ and $(\hat{\block{X}}_{\tau,1}^j,\dots \hat{\block{X}}_{\tau,k}^j) =\recover(\Dropouts_{T-1}, (\block{X}_{\tau,[1,k]}^{i_1})^j,\dots, (\block{X}_{\tau,[1,k]}^{i_{\tilde{n}}})^j)$ to obtain shares $\packshr{\hat{\block{Z}}_{\tau, \ell}}$ and $\packshr{\hat{\block{X}}_{\tau, \ell}}$ for $\ell\in[k]$.
    
    \item Then again reshare as $\packshr{\block{Z}_{\tau,[1,k]}^j}=\reshare(\hat{\block{Z}}_{\tau,[1,k]}^j)$ and $\packshr{\block{X}_{\tau,[1,k]}^j}=\reshare(\hat{\block{X}}_{\tau,[1,k]}^j)$ to set $\calC_{T+1}$ (via authenticated and encrypted channels through the sever). 
    
    \item For each $\ell\in[k]$, compute  
	$\packshr{\block{Y}_{T,\ell}} = \sum_{\tau=1}^{T} \mat{A}_{[T,\tau]}\cdot \packshr{\hat{\block{X}}_{\tau,\ell}}+\mat{B}_{[T,\tau]}\cdot \packshr{\hat{\block{Z}}_{\tau,\ell}}$, then send shares $\block{Y}_{T,\ell}^j$ to the server. 

\end{itemize}

\end{itemize}

\quad\textbf{Server:}
\begin{itemize}[noitemsep,topsep=1pt]
\item Receive from each Party $\Party_j$ in $\calC_{T}$ and register dropped clients in list $\Dropouts_T$.
\item For each $\ell,m\in[k]$, compute and output $Y_{T,\ell,m} = \sum_{j\in[\numParties]\setminus\Dropouts_T}\reconstruct([\numParties]\backslash\Dropouts_t, \block{Y}_{T,\ell}^j, m)$. 
\item Forward (encrypted) reshares to the clients of $\calC_{T+1}$ and send dropped clients list $\Dropouts_T$ to clients of $\calC_{T+1}$
\vspace{-10pt}
\end{itemize}
\end{protocol}
\end{table*}
We now present our Distributed Matrix Mechanism.
See Protocol~\ref{prot:ppfl} for a detailed description of $\PFL$. 
We will assume that each committee has the same number $n$ of clients.
We write the protocol in terms of batches of gradients of size $k^2$, where $k$ is the packing parameter;
if the model dimension $d>k^2$, then the parties repeat $\PFL$ over batches.\footnote{We also assume that communication between clients in $\PFL$ is done via authenticated and encrypted channels, routed through the server and using a Public-Key Infrastructure (PKI), as in previous works, e.g.,~\cite{SecAgg}, etc.}

Each iteration of this protocol is completed in two communication rounds.\footnote{Note that if there are no dropouts, each iteration can complete in one communication round.}
In the $T$-th iteration, we will assume that the $\numParties$ clients selected have received, 
for $\tau\in[T-1]$, (i) $\shr{\block{X}_{\tau,[1,k]}^{i_l}}$, which are the (aggregated) gradient shares from the first $T-1$ iterations, reshared by non-dropout party $i_l$ in the previous iteration; and
(ii) $\shr{\block{Z}_{\tau,[1,k]}^{i_l}}$, which are the (aggregated) noise shares sampled in the first $T-1 $ iterations, reshared by party $i_l$ in the previous iteration.
The clients first recover shares of the same:
\begin{align*}
	(\hat{\block{Z}}_{\tau,1}^j,\dots ,& \hat{\block{Z}}_{\tau,k}^j) = \\
	& \recover(\Dropouts_{T-1}, (\block{Z}_{\tau,[1,k]}^{i_1})^j,\dots, (\block{Z}_{\tau,[1,k]}^{i_{\tilde{n}}})^j) \\
	(\hat{\block{X}}_{\tau,1}^j,\dots & \hat{\block{X}}_{\tau,k}^j) = \\
	& \recover(\Dropouts_{T-1}, (\block{X}_{\tau,[1,k]}^{i_1})^j,\dots, (\block{X}_{\tau,[1,k]}^{i_{\tilde{n}}})^j),
\end{align*}
based on iteration $T-1$ dropout clients $\Dropouts_{T-1}$ received from the server.

Next, as in the distributed setting, the clients will compute their local gradients $\block{g}_{T,i}$ (clipped, scaled, flattened, and rounded as in~\cite{DDGFL}) 
and sample $\block{z}_{T,i}$ from a noise distribution $\noisedist$.
Then, each client will compute some secret shares $\shr{\block{z}_{T,i}},\shr{\block{g}_{T,i}}$ of their local gradients and noise and distribute them to the other clients of this iteration.
Once receiving these shares, the parties (locally) aggregate them: 
$\shr{\block{Z}_{T}}=(\sum_{\eta=1}^\numParties\shr{\block{z}_{T,\eta}})$ and $\shr{\block{X}_{T}}=(\sum_{\eta=1}^\numParties\shr{\block{g}_{T,\eta}})$.

The parties then take linear combinations, according to $\mat{A}$ and $\mat{B}$, of the packed shares of gradients and noise of all previous iterations, including this one, to obtain shares of the next output of the matrix mechanism, $\packshr{\widehat{\mat{AX}}_T}$.
The parties then reconstruct these noisy gradients to the server (which are then unflattened and rescaled by the server~\cite{DDGFL}).

Finally, the clients will reshare their shares $\hat{\block{Z}}_{\tau,[1,k]}^j$ and $\hat{\block{X}}_{\tau,[1,k]}^j$ of the aggregated noise and gradients from the first $T$ iterations. 
The clients reshare the shares according to protocol $\PSS$ in Section~\ref{sec:pss}.

\textbf{An Optimization.} Typically, the matrix $\mat{A}$ is just the prefix sum matrix\iftoggle{short}{.}{; i.e., a lower triangular matrix of ones, where $\mat{A}_{[i,j]}=1, \forall j\leq i$ and $\mat{A}_{[i,j]}=0,\forall j>i$.}
In this case, the server anyway sees $\widehat{\mat{A}\block{X}}_T-\widehat{\mat{A}\block{X}}_{T-1}=\block{X}_T + (\mat{B}_{[T:,]}-\mat{B}_{[T-1:,]})\block{Z}$, so we can just reconstruct this to it, and do not need to reshare the gradients $\block{X}$ across iterations, saving a factor of two.
This is how we compute the communication complexity of our protocol.

\textbf{Matrix Factorizations  and Communication Complexity.}
We use two different matrix factorizations $\mat{A}=\mat{B}\mat{C}$ for our experiments.
The first is the optimal with respect to the loss function $\mathcal{L}(\mat{B},\mat{C})=\sens ||\mat{B}||_F^2$ for the $b$-min-sep-participation schema $\schema$~\cite{BandedFL}.
In this case, in iteration $T$, the total communication complexity is $(d\cdot T)/(4\mu^2)$, using $k=2\mu \numParties$ as above.
The second factorization is the Honaker Online mechanism~\cite{treeFL,honaker}, where $\mat{C}$ is essentially the binary tree matrix.\footnote{$\PFL$ can easily use any factorization, including BLTs~\cite{toep_mult}, which are not open sourced.}
This mechanism has the benefit that it allows for implementations with only $(d \log T)/(4\mu^2)$ total communication complexity;
in the $T$-th iteration, the released model can be computed by a sum of at most $d\cdot \log(T)$ sharings.

\textbf{Security.} We formally prove the security of $\PFL$ in Section~\ref{sec:reshare_sec} based on the security of $\PSS$, 
i.e., nothing but the noisy gradients are revealed to an adversary corrupting at most $t_c$ parties in each iteration and the server.
We also prove dropout tolerance and distributed DP even in the presence of corrupted parties that perturb the noisy gradients released to the server by independent values $\chi$.


\textbf{Privacy.}
We now state the privacy of our protocol when the noise distribution $\noisedist$ is the Discrete Gaussian distribution with mean $0$ and variance $\sigma^2/\gamma^2$, $\mathcal{N}_\Z(0,\sigma^2/\gamma^2)$.\footnote{Note that, as with prior work in the distributed DP setting, e.g.,~\cite{DDGFL}, we must use discrete noise.
This is because cryptographic techniques work over finite, discrete algebraic structures (like finite fields/rings/groups).}
We note that another option for noise is the Skellam Distribution~\cite{Skell} that yields roughly the same privacy-utility tradeoff; thus we stick to the more standard Discrete Gaussian.
Moreover, our preliminary experiments showed that the Skellam Distribution did not seem to perform even close to as well emprically as the Discrete Gaussian.
Another tempting choice to obtain DP is to use the technique of the Poisson Binomial Mechanism~\cite{poisson-bin}, however, that work departs from the additive noise paradigm and thus does not seem applicable for us.

First we explain some parameters:
$c$ is the norm to which gradients are clipped, $\gamma>0$ is used to determine the granularity for the discretization of gradients, $\beta$ determines the bias of the randomized rounding for discretization, and $\sigma$ is the noise scale of the Discrete Gaussians.
Details on these steps (from~\cite{DDGFL}) are provided in Section~\ref{sec:discret}.
\iftoggle{short}{}{The $\tau$ value in the theorem bounds the max divergence between the sum of $\numParties$ discrete Gaussians each with scale $\sigma/\gamma$ and one discrete Gaussian with scale $\sqrt{n}\sigma/\gamma$.}
The following theorem is proved in Section~\ref{sec:pfs}.
\begin{theorem}\label{thm:privacy}
    Consider a query matrix $\mat{A}\in\R^{\numRounds\times \numRounds}$ along with a fixed factorization $\mat{A}=\mat{B}\mat{C}$ with $\Delta = \sensone$. 
    Let $\tau \coloneqq 10\cdot\sum_{k=1}^{\numParties-1} e^{-2\pi^2\frac{\sigma^2}{\gamma^2}\cdot\frac{k}{k+1}}$ and
    \begin{align*}
        \hat{c}^2 \coloneqq \min \left\{
        \begin{aligned}
            & 
            c^2+\frac{\gamma^2}{4}d+\sqrt{2\log(1/\beta)}\cdot\gamma\cdot(c+\frac{\gamma}{2} \sqrt{d}), \\
            &
            (c+\gamma\sqrt{d})^2
        \end{aligned}
        \right\},
    \end{align*}
    \iftoggle{short}{}{Assume that the number of corruptions in each committee $t_c$ and number of dropouts (of honest parties) in each committee $t_d$ is such that $t_c+t_d< (1/2-\mu)\cdot n$ for $0<\mu<1/2$.}
    Then $\PFL$ satisfies $\frac{1}{2}\varepsilon^2$-concentrated DP for
        \newline$\varepsilon \coloneqq \min \left\{
            \sqrt{\frac{\Delta^2\hat{c}^2}{n\sigma^2} + 2\tau d}, 
            \frac{\Delta\hat{c}}{\sqrt{n}\sigma} + \tau\sqrt{d}
        \right\}$.
\end{theorem}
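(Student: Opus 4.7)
The plan is to factor the argument into a cryptographic reduction followed by an information-theoretic DP analysis of the idealized centralized output distribution.

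First, I would invoke the simulation-based security of $\PFL$ (to be established in Section~\ref{sec:reshare_sec}, which in turn rests on the security of $\PSS$ from Section~\ref{sec:pss}). That result lets me replace the joint view of the corrupted clients and the server by a transcript that can be simulated from the sequence $\{\widehat{\mat{A}\mat{X}}_T + \chi_T\}_{T\in[\numRounds]}$, where each $\chi_T$ is produced by the adversary but is independent of the honest parties' inputs $\block{g}_{T,i},\block{z}_{T,i}$. Because zCDP is closed under both post-processing and data-independent shifts, it suffices to prove that the sequence $\{\widehat{\mat{A}\mat{X}}_T\}_{T\in[\numRounds]}$ satisfies $\tfrac{1}{2}\varepsilon^2$-zCDP with respect to $\nbrs$.

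Second, I would analyze an idealized variant of this centralized output in which the aggregate noise $\mat{Z}_T=\sum_{i=1}^{\numParties}\block{z}_{T,i}$ is replaced by a single discrete Gaussian $\tilde{\mat{Z}}_T\sim\mathcal{N}_\Z(0,n\sigma^2/\gamma^2)^d$. In this idealized world the entire release is just the standard discrete-Gaussian matrix mechanism applied to the linear query $\mat{C}\mat{X}$. Over neighboring streams $\adjacencyDelta$ its $\ell_2$-sensitivity is $\sens \le \Delta\cdot\hat{c}$, where $\Delta=\sensone$ captures the factorization and $\hat{c}$ is the $\ell_2$-norm bound of a gradient after clipping at $c$ and randomized rounding with granularity $\gamma$ and bias $\beta$. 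The two expressions inside the $\min$ defining $\hat{c}^2$ come directly from applying respectively the conditional-randomized-rounding analysis and the worst-case displacement bound of~\cite{DDGFL}. Substituting into the known discrete-Gaussian zCDP bound at noise scale $\sqrt{n}\sigma/\gamma$, the $\gamma$ factors cancel (the gradient lives on the scaled lattice with sensitivity $\Delta\hat{c}/\gamma$ versus noise scale $\sqrt{n}\sigma/\gamma$) and I obtain an idealized $\rho_{\text{ideal}} = \Delta^2\hat{c}^2/(2n\sigma^2)$, which accounts for the first summand inside each of the two candidate $\varepsilon$ expressions.

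The most delicate step, and the main obstacle, is quantifying how far the actual distribution of $\mat{Z}_T$ (a sum of $n$ i.i.d.~discrete Gaussians of scale $\sigma/\gamma$) deviates from the idealized single discrete Gaussian $\tilde{\mat{Z}}_T$. I would follow the Fourier/Poisson-summation approach used in the proof of Theorem~1 of~\cite{DDGFL}: bound the per-coordinate max (resp.~R\'enyi) divergence by controlling the lattice theta-series, which telescopes into exactly $\tau = 10\sum_{k=1}^{n-1} e^{-2\pi^2(\sigma/\gamma)^2 k/(k+1)}$, and then tensorize across the $d$ coordinates. Two natural ways to absorb this slack give the two candidate bounds: folding $d\tau$ additively into the zCDP budget yields the term $+2\tau d$ inside the square root, while applying the triangle inequality at the R\'enyi level first and only then converting to $\varepsilon$ yields the additive $+\tau\sqrt{d}$ outside the square root; taking the minimum of these two (and of the two $\hat{c}^2$ choices) gives the theorem. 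Carefully tracking constants in the discrete-Gaussian tail sum, and verifying that composition across the $\numRounds$ iterations collapses into the single factorization-level sensitivity $\|\mat{C}\mat{U}\|_F$ rather than an iteration-by-iteration bound, will be the most technical pieces.
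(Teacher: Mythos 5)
Your proposal is correct and tracks the paper's own proof closely in structure: reduce via post-processing (backed by the protocol's simulation security) to the centralized release $\mat{B}(\mat{C}\mat{X}+\mat{Z})$, note that $\mat{B}$ is itself post-processing so it suffices to bound $\mat{C}\mat{X}+\mat{Z}$, bound the per-contribution $\ell_2$ norm after conditional randomized rounding by $\hat{c}$ so that $\sens=\Delta\hat{c}$, and then apply the sum-of-discrete-Gaussians zCDP bound with the $\tau$ slack term. The one substantive difference is in how you handle that last ingredient: the paper simply cites Proposition~14 of~\cite{DDGFL} (reproduced in its appendix), which already gives the three-way $\min$ of divergence bounds from which both $\varepsilon$ expressions drop out after clearing $\alpha$, whereas you sketch how to re-derive that proposition from scratch via comparison to a single discrete Gaussian $\mathcal{N}_\Z(0,n\sigma^2/\gamma^2)^d$ and Poisson-summation/theta-series estimates. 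Your sketch of the two alternative ways to absorb the $\tau$ slack (folded into the zCDP budget as $+2\tau d$, versus kept additive at the R\'enyi-root level as $+\tau\sqrt d$) matches the first and third branches of that proposition, so the end result is the same; you have just chosen to unpack the cited lemma rather than invoke it. One small refinement you could make explicit: the paper's proof treats the $\chi_T$ additive perturbations as part of the simulation-security reduction (Section~\ref{sec:reshare_sec}) and doesn't mention them in the DP proof at all, since preserving DP under data-independent additive shifts is immediate post-processing; your proposal makes this separation explicit, which is a useful clarification but not a departure.
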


\textbf{Accuracy.}
We now formally prove the accuracy of our Distributed Matrix Mechanism (DMM).
First, we explain an additional parameter: $m$ is the bit-width of the finite field $\F$ used in 
$\PFL$.
We prove the following in Section~\ref{sec:pfs}.
\begin{theorem}\label{thm:asymp-err}
    Let $\numParties, m, d, \numRounds\in \N$,  and $c,\varepsilon > 0$ satisfy:
    \begin{align*}
        m\geq \tilde{O}\biggl(\max_{T\in[\numRounds]}||\mat{A}_{[T:,]}||_2\sqrt{\numParties T}+\max_{T\in[\numRounds]}||\mat{B}_{[T:,]}||_2\frac{\sqrt{d}\Delta}{\varepsilon}\biggr).
    \end{align*}
    Let $\PFL$ be instantiated with parameters \newline$\gamma = \tilde{O}\biggl(\frac{\max_{T\in[\numRounds]}||\mat{A}_{[T:,]}||_2c\sqrt{\numParties T}}{m \sqrt{d}}+\frac{\max_{T\in[\numRounds]}||\mat{B}_{[T:,]}||_2c\Delta}{\varepsilon m}\biggr)$, $\beta \leq \Theta\left(\frac{1}{\numParties}\right)$, and $\sigma = \tilde{\Theta}\biggl(\frac{c\Delta}{\varepsilon\sqrt{\numParties}}+\sqrt{\frac{d}{\numParties}}\cdot\frac{\gamma\Delta}{\varepsilon}\biggr)$.
    
    Then $\PFL$ attains $\frac{1}{2}\varepsilon^2$-concentrated DP and accuracy:
    \begin{align*}
        \sum_{T=1}^{\numRounds}\Exp & \left[\left|\left| \PFL(X)-  \mat{A}_{[T,:]} \sum_{i=1}^\numParties\mat{X}_i\right|\right|_2^2\right]
        \\ & \leq O\left(||\mat{B}||_F^2\frac{c^2\Delta^2d}{\varepsilon^2}\right).
    \end{align*}
\end{theorem}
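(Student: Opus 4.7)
The plan is to first verify the $\frac{1}{2}\varepsilon^2$-concentrated DP guarantee via Theorem~\ref{thm:privacy}, and then to decompose the expected squared error into three sources---modular overflow, randomized-rounding error, and matrix-mechanism noise---bounding each and summing over iterations. I would begin by showing that the bit-width $m$ chosen in the hypothesis is large enough that the integer reconstructions $\packshr{\block{Y}_{T,\ell}}$ never overflow $\F$, except with negligible probability. A standard discrete-Gaussian tail bound controls noise magnitudes, and the discretized gradients have entries of order $c/\gamma + \sqrt{d}$; together these give reconstructed values bounded by $\tilde{O}(\max_T ||\mat{A}_{[T,:]}||_2\,c\sqrt{\numParties T}/\gamma + \max_T ||\mat{B}_{[T,:]}||_2\sqrt{\numParties d}\,\sigma/\gamma)$, which the stated $m$ dominates. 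With overflow avoided, the server's output equals $\gamma(\mat{A}_{[T,:]}\tilde{\block{X}} + \mat{B}_{[T,:]}\block{Z})$ up to an adversarial $\chi_T$ that is independent of user data and can be treated as post-processing for both privacy and expected accuracy.

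For privacy I would substitute the prescribed $\sigma$ into the expression for $\varepsilon$ in Theorem~\ref{thm:privacy}: the first branch $\sigma = \tilde{\Theta}(c\Delta/(\varepsilon\sqrt{\numParties}))$ yields $\Delta\hat{c}/(\sqrt{\numParties}\sigma) \leq \varepsilon$ provided $\hat{c} = O(c)$, which in turn holds because the chosen $\gamma$ forces $\gamma\sqrt{d} = O(c)$. Simultaneously, $\sigma^2/\gamma^2$ is large enough that $\tau = 10\sum_{k} e^{-2\pi^2(\sigma/\gamma)^2 k/(k+1)}$ decays exponentially and $\tau d$ is negligible compared to $\varepsilon^2$, completing the privacy check.

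For accuracy I would write
\begin{align*}
\PFL(X) - \mat{A}_{[T,:]}\sum_i \mat{X}_i = \mat{A}_{[T,:]}(\tilde{\block{X}} - \block{X}) + \gamma\,\mat{B}_{[T,:]} \block{Z},
\end{align*}
where $\tilde{\block{X}}_\tau - \block{X}_\tau$ is the stochastic rounding error (each coordinate zero-mean with variance $O(\gamma^2 \numParties)$ by the randomized-rounding analysis inherited from~\cite{DDGFL}) and $\block{Z}_\tau$ is a sum of $\numParties$ independent $\mathcal{N}_{\Z}(0,\sigma^2/\gamma^2)$ samples. Taking expectations, the rounding contribution sums over $T\in[\numRounds]$ to $O(||\mat{A}||_F^2\,\gamma^2 \numParties d)$ and the noise contribution to $\numParties \sigma^2 d\,||\mat{B}||_F^2 = \tilde{O}\bigl(||\mat{B}||_F^2 c^2\Delta^2 d/\varepsilon^2 + ||\mat{B}||_F^2 \gamma^2 \Delta^2 d^2/\varepsilon^2\bigr)$. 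The chosen $\gamma$ drives both the rounding term and the second piece of the noise term below the leading $||\mat{B}||_F^2 c^2\Delta^2 d/\varepsilon^2$, yielding the claimed bound. The main obstacle is coordinating the three parameters $m,\gamma,\sigma$ so that overflow-prevention, $\hat{c}=O(c)$ in the privacy accounting, and subdominance of the discretization error all hold simultaneously; once these chains of inequalities are verified, the rest is routine variance arithmetic together with pushing $\ell_2$ norms through $\mat{A}$ and $\mat{B}$ to convert row-by-row bounds into Frobenius-norm bounds.
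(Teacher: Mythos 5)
Your high-level plan matches the paper's: both prove privacy by specializing Theorem~\ref{thm:privacy} with the chosen $\sigma$ and use $\gamma\sqrt{d}=O(c)$ to get $\hat c = O(c)$, and both decompose the accuracy loss into a rounding piece, a correlated-noise piece, and a modular-clipping piece. However, two steps in your accuracy argument are asserted rather than established, and they correspond exactly to the technically substantive part of the paper's proof.

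First, your treatment of modular overflow---``never overflow $\F$, except with negligible probability''---is not enough on its own. The quantity being bounded is an \emph{expectation} of a squared error, so a low-probability overflow event could still contribute nontrivially unless its contribution is explicitly controlled. The paper addresses this by first proving a modified Proposition~26 of~\cite{DDGFL} (Proposition~\ref{prop:error_modified}), which gives a sub-Gaussian moment bound for the pre-clipping estimator, and then invoking Proposition~35 of~\cite{DDGFL} to bound the per-coordinate clipping error by $O\!\left(r^2 e^{-r^2/2\hat\sigma^2}\right)$ with $r=\gamma m/2$. This term is carried explicitly through the final bound (see Theorem~\ref{thm:error}), and $\gamma$ is then chosen precisely so that $\tfrac{m^2}{n}\exp(-\gamma^2 m^2/8\hat\sigma^2)\le 1$. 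Your proposal skips this argument entirely; you would need to reproduce something equivalent.

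Second, your claim that ``the chosen $\gamma$ drives both the rounding term and the second piece of the noise term below the leading $\|\mat B\|_F^2 c^2\Delta^2 d/\varepsilon^2$'' is only immediate for the second noise piece, where it follows from $\gamma^2 d = O(c^2)$. For the rounding term $\|\mat A\|_F^2\gamma^2\numParties d$, the comparison is not a one-line consequence of $\gamma^2 d = O(c^2)$; one needs the precise functional form $\gamma^2 = \tilde O\!\left(\left(\tfrac{\max_T\|\mat A_{[T:,]}\|_2^2 c^2 nT}{d}+\tfrac{\max_T\|\mat B_{[T:,]}\|_2^2 c^2\Delta^2}{\varepsilon^2}\right)\cdot\tfrac{1}{m^2}\right)$ combined with the lower bound $m^2 \ge \tilde O(\max_T\|\mat A_{[T:,]}\|_2^2 nT + \max_T\|\mat B_{[T:,]}\|_2^2 d\Delta^2/\varepsilon^2)$, traced \emph{per row} $T$ before summing (as the paper's proof does). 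This is the step the paper isolates as ``once these chains of inequalities are verified, the rest is routine,'' and it is the part your proposal leaves unverified. You would strengthen your proposal considerably by explicitly writing out the row-wise inequality $\|\mat A_{[T,:]}\|_2^2\gamma^2\numParties d \lesssim \|\mat B_{[T,:]}\|_2^2 c^2\Delta^2 d/\varepsilon^2$ and its dependence on the $m$ lower bound, rather than asserting it.
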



\section{Experiments}
Here we empirically evaluate DMM for FL on the Stack Overflow Next Word Prediction (SO-NWP)~\cite{SO} and FEMNIST~\cite{FEMNIST} public benchmarks.
We compare to the prior state-of-the art for privacy-utility tradeoff with distributed DP, the Distributed Discrete Gaussian Mechanism (DDG)~\cite{DDGFL} which also uses privacy amplification via sampling (DMM does not), and central DP, BandMF~\cite{BandedFL}.
Our full experimental setup is described in Section~\ref{sec:moreexps}, and closely follows prior work, including model hyperparameters~\cite{DDGFL,BandedFL}.
All experiments are run on a machine with an AMD EPYC 7R32 processor and an A10G GPU.
See Section~\ref{sec:moreexps} for further evaluation of our results.

\textbf{Privacy Parameters and Selected Hyperparameters.}
For both matrix factorizations, we measure $\sensone$ with respect to the $b$-min-sep-participation schema using~\citet[Theorems 2 and 3]{BandedFL}.
For SO-NWP, we use $\numRounds = 2052$ and $b=342$ (as in~\cite{BandedFL}) and we use $\numRounds=2^{11}=2048$ and $b=512$ for the Honaker factorization, since $\numRounds$ needs to be a power of two.
For FEMNIST, we use $\numRounds = 1445$ (similar to~\cite{DDGFL}) and $b=85$ for the optimal factorization and $\numRounds = 2^{10}=1024$ and $b=64$ for the Honaker factorization---the reason for smaller bands is that there is less data in the FEMNIST dataset, which means clients have to participate more often.

\begin{figure}[t!]
	\centering
	\iftoggle{neurips}{\includegraphics[width=0.6\linewidth]{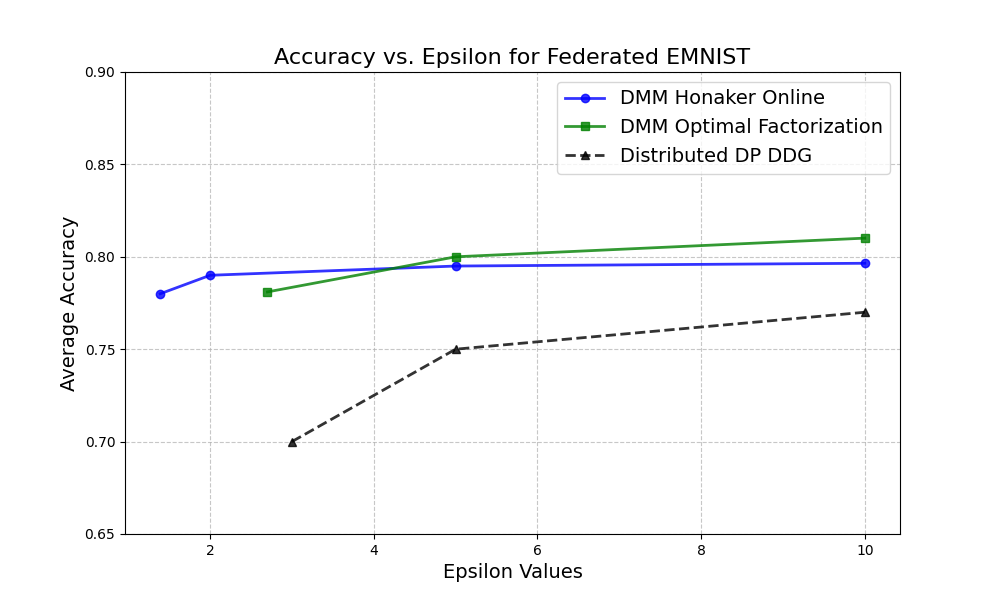}}{\includegraphics[width=\linewidth]{figures/priv_acc4.png}}
	\caption{Test accuracies on FEMNIST across different privacy levels $\varepsilon$ for the distributed DP DDG mechanism and our distributed DP DMM instantiated with the optimal matrix factorization and the Honaker online matrix factorization.
		DMM performs $\approx 4$ percentage points better than the prior distributed DP approach.
		We use $\delta=1/N$ for $(\varepsilon,\delta)$-DP, where $N$ is the total number of clients across training.
	}
	\label{fig:priv-acc}
\end{figure}

\textbf{Results.}
For performance evaluation of DMM, we use $\numParties = 40$ clients per iteration.
Figures~\ref{fig:priv-acc-so} and~\ref{fig:priv-acc} show that across $\varepsilon$ privacy levels, our DMM significantly outperforms DDG in terms of classification accuracy.
This is precisely because DMM uses correlated noise across iterations, whereas DDG uses fresh noise in each iteration.
Our DMM also gets accuracy close to that of the state-of-the-art central DP solutions for SO-NWP.
This gap comes from the error introduced in discretizing values and modular clipping that both arise from DMM representing numbers as finite field elements, as well as some error arising from summing Discrete Gaussians (see Section~\ref{sec:pfs}).
These errors are also present in DDG~\cite{DDGFL}.
We also see that using the Honaker factorization only slightly degrades the accuracy  compared to the mechanism based on the optimal $b$-min-sep-participation matrix factorization.
Therefore, the tree mechanism might be best in practice due to much better efficiency, seen below.

\textbf{Efficiency.}
Tables~\ref{tab:soeff} and~\ref{tab:eff} show client computation and communication costs of DMM for SO-NWP and FEMNIST, respectively, using both the optimal matrix factorization and the Honaker matrix factorization.
We show the costs of $\SecAgg$ (the bottleneck of DDG) using the Flamingo construction~\cite{Flamingo} (recall: any $\SecAgg$ protocol can be used; Flamingo is the state-of-the-art) and the costs of the resharing protocol $\PSS$ in $\PFL$.
We also give the communication costs of the naive secret resharing protocol of Section~\ref{sec:pss}.
For $\PFL$, we assume $\mu=1/6$; i.e., the number of corrupted and dropout parties per iteration satisfies $t_c+t_d < n/3$.
We use 32 bits to represent field values.
For computational experiments, we use $\numParties=64$, as the Flamingo code requires powers of two. 
For the optimal matrix factorization results, we report the worst-case complexity per iteration, which is the penultimate iteration, since clients reshare the noise from all previous iterations.

We see that the naive $n^2$ overhead secret resharing protocol has infeasible communication of up to 2.13 TB per client, which is substantially more than the communication of $\PSS$, with communication as low as 5.73 MB per client.
We also see that the optimal matrix factorization substantially increases both the computation and communication compared to Honaker online factorization.
This suggests that the small increase in accuracy from using the optimal matrix factorization may not be worth it in terms of the added efficiency costs.
Compared to $\SecAgg$ used by DDG, we see a modest increase in computation with the Honaker online factorization from $\PSS$ in $\PFL$;
less than 10 seconds (and sometimes less than 100 ms) per iteration is very reasonable.
In terms of communication, we see an increase of $< 10$ MB with the Honaker online factorization from $\PSS$ in $\PFL$ compared to that of $\SecAgg$ in DDG.
We believe that this added overhead is worth it given the increased accuracy.

\begin{table}[t!]
  \centering
  \footnotesize
  \begin{tabularx}{\iftoggle{neurips}{0.724\columnwidth}{1.01\columnwidth}}{>{\hsize=1.4\hsize}X| p{1.0cm} p{.97cm} | p{1.09cm} p{1.23cm} p{1.09cm}} 
    \arrayrulecolor{gray!20}\hline
    \rowcolor{gray!20}  & \textbf{$\PSS$ Comp.} & \textbf{$\SecAgg$ Comp.} & \textbf{$\PSS$ Comm.} & \textbf{Naive SR Comm.} &\textbf{$\SecAgg$ Comm.}  \\
    \arrayrulecolor{black}\hline
    Opt. &  3.34 s & 58.5 ms & 828 MB & 379 GB & 4.07 MB \\
    \hline
    Hon. &  94.2 ms & 58.5 ms &  5.73 MB & 2.61 GB & 4.07 MB \\
  \end{tabularx}
  \caption{
  Client computation and communication of our $\PSS$ resharing protocol, naive secret resharing, and $\SecAgg$  per iteration on Federated EMNIST for committee size $\numParties = 64$.
  We give results for both the optimal and more efficient Honaker online matrix mechanisms.
  \iftoggle{short}{}{$\SecAgg$ is the bottleneck of DDG and $\PSS$ (as opposed to naive secret resharing) is the bottleneck of DMM.}
  (ms $\coloneqq$ milliseconds; s $\coloneqq$ seconds).
  }
    \label{tab:eff}
\end{table}

\section{Conclusion}
We present in this paper the Distributed Matrix Mechanism (DMM) for FL, which achieves both distributed DP and privacy-utility trade-off of the matrix mechanism for central DP in FL.
Along the way, we introduce a constant-overhead linear secret resharing protocol $\PSS$.
We validate experimentally the utility and efficiency of DMM.
Future work includes designing better low-memory matrix factorizations to get efficiency with better accuracy, as well as adding malicious security to the \emph{encryption} approach of~\cite{EPRINT:BBGKOX24}.

\iftoggle{anonymous}{}{
\section*{Acknowledgements}
We thank Keith Rush for providing valuable assistance with the code used to factorize matrices optimally.
	
\section*{Disclaimer}
Disclaimer: This paper was prepared for informational purposes by the Artificial Intelligence Research group of JPMorgan Chase \& Co. and its affiliates ("JP Morgan'') and is not a product of the Research Department of JP Morgan. JP Morgan makes no representation and warranty whatsoever and disclaims all liability, for the completeness, accuracy or reliability of the information contained herein. This document is not intended as investment research or investment advice, or a recommendation, offer or solicitation for the purchase or sale of any security, financial instrument, financial product or service, or to be used in any way for evaluating the merits of participating in any transaction, and shall not constitute a solicitation under any jurisdiction or to any person, if such solicitation under such jurisdiction or to such person would be unlawful.

© 2025 JPMorgan Chase \& Co. All rights reserved.
}
}

\iftoggle{neurips}{}{
\section*{Impact Statement}
Our work provides privacy for FL using formal $(\varepsilon,\delta)$-DP guarantees.
One should ensure when using DP that the used $(\varepsilon,\delta)$ privacy levels are adequate for protecting sensitive data in their setting.
}

\iftoggle{neurips}{\bibliographystyle{plain}}{\bibliographystyle{icml2025}}
\bibliography{references}

\newpage
\appendix
\onecolumn
{\Large\bfseries\noindent Supplementary Material}
\section{Discretization Details of~\cite{DDGFL}}\label{sec:discret}
We use the randomized rounding strategy from~\cite{DDGFL} for discretization in $\PFL$.
At a high-level, each client first clips and scales their input gradient.
Then, the clients flatten their gradient vectors using some unitary matrix $\mat{U}$ (intuitively, this minimizes the risk of modulo overlap in vector elements that are particularly large).
Finally, the clients use a randomized process to round their gradient vectors in $\R^d$ to $\Z^d$.
On the sever side, after receiving the aggregated, noise outputs $\hat{AX}_T$ in each round, the server unflattens the vector by applying $\mat{U}^T$ and then descales.
Protocols~\ref{prot:clientproc} and~\ref{prot:serverproc} give more detai, but we refer the readers to~\cite{DDGFL} for full details on possible flattening matrices $\mat{U}$ and the randomized rounding procedure used.
\begin{table*}[t]
	\begin{protocol}{Client Gradient Processing\label{prot:clientproc}}
		\textbf{Input}: Gradient $\block{g}_i\in\R^d$.\\
		\textbf{Parameters}: model dimension $d$, clipping threshold $c>0$, granularity $\gamma$, modulus $m$, noise scale $\sigma>0$ and bias $\beta\in[0,1)$.
		\begin{enumerate}
			\item Clip and scale gradient: $\block{g}_i'=\frac{1}{\gamma}\min\{1,\frac{c}{||\block{g}_i||_2}\}\cdot \block{g}_i\in\R^d$.
			\item Flatten vector: $\block{g}_i''= U\cdot\block{g}_i'\in\R^d$.
			\item \textbf{Repeat}:
			\begin{enumerate}
				\item Let $\tilde{\block{g}}_i\in\Z^d$ be a randomized rounding of  $\block{g}_i''$. i.e., $\tilde{\block{g}}_i$ is a product distribution with $\Exp[\tilde{\block{g}}_i] = \block{g}_i''$ and $||\tilde{\block{g}}_i-\block{g}_i''||_\infty<1$.
			\end{enumerate}
			\textbf{until} $|||\tilde{\block{g}}_||_2\leq \min\{c/\gamma +\sqrt{d}, \sqrt{c^2/\gamma^2+\frac{1}{4}d+\sqrt{2\log(1/\beta)}\cdot (c/\gamma+\frac{1}{2}\sqrt{d})}\}$.
			\item \textbf{Output}: $\tilde{\block{g}}_i$.
            \vspace{-10pt}
		\end{enumerate}
	\end{protocol}
\end{table*}

\begin{table*}
	\begin{protocol}{Server Aggregate Noisy Release Value Processing\label{prot:serverproc}}
		\textbf{Input}: Vector $\widehat{AX}_T$.\\
  \textbf{Parameters}: model dimension $d$, clipping threshold $c>0$, granularity $\gamma$, modulus $m$, noise scale $\sigma>0$ and bias $\beta\in[0,1)$.
  \begin{enumerate}
      \item Map $\Z_m$ to $\{1-m/2,2-m/2,\dots,-1,0,1,\dots,m/2-1,m/2\}$ so that $\widehat{AX}_T$ is mapped to $\widehat{AX}_T'\in[-m/2,m/2]^d\cap\Z^d$ (and we have $\widehat{AX}_T'\mod m = \widehat{AX}_T$.
  \end{enumerate}
  \textbf{Output}: $\gamma \cdot U^\intercal \widehat{AX}_T'\in\R^d$.
	\end{protocol}
\end{table*}

To help with the analysis,~\cite{DDGFL} uses the following definitions to represent the conditional randomized rounding.
We present them verabtim.
\begin{definition}[Randomized Rounding]\label{def:randround}
    Let $\gamma>0$ and $d\in\N$.
    Define $R_\gamma:\R^d\to \gamma \Z^d$ (where $\gamma\Z^d\coloneqq \{(\gamma z_1,\gamma z_2,\dots,\gamma z_d): z_1,\dots, z_d\in\Z\}\subseteq\R^d$) as follows.
    For $x\in[0,\gamma]^d$, $R_\gamma(x)$ is a product distribution on $\{0,\gamma\}^d$ with mean $x$;
    that is, independently for each $i\in[d]$, we have $\Pr[R_\gamma(x)_i=0]=1-x_i\gamma$ and $\Pr[R_\gamma(x)_i=\gamma]=x_i/\gamma$.
    In general, for $x\in\R^d$, we have $R_\gamma(x) = \gamma\lfloor x/\gamma\rfloor + R_\gamma(x-\gamma\lfloor x/\gamma\rfloor)$; here $\gamma\lfloor x/\gamma \rfloor \in \gamma\Z^d$ is the point $x$ rounded down coordinate-wise to the grid.
\end{definition}

\begin{definition}[Conditional Randomized Rounding]\label{def:condrandround}
    Let $\gamma>0$ and $d\in \N$ and $G\subseteq \R^d$.
    Define $R_\gamma^G: \R^d\to \gamma \Z^d\cap G$ to be $R_\gamma$ conditioned on the hte output being in $G$.
    That is, $\Pr[R_\gamma^G(x)=y]=\Pr[R\gamma(x)=y]/\Pr[R_\gamma(x)\in G]$ for all $y\in \gamma \Z^d \cap G$, where $R_\gamma$ is as in Definition~\ref{def:randround}.
\end{definition}
\section{Proofs for Section~\ref{sec:dist-mat-mech}}\label{sec:pfs}
\subsection{Proof of Theorem~\ref{thm:privacy}}
First we recall the notion of R\'enyi Divergences and Concentrated Differential Privacy~\cite{concDP1,concDP2}, as well as some other standard DP notions.
We also define the Discrete Gaussian and provide its DP guarantees.
See~\cite{DDGFL} for more details.
Then we prove Thoerem~\ref{thm:privacy}

\begin{definition}[R\'enyi Divergences]
	Let $P$ and $Q$ be probability distributions on some common
	domain $\Omega$. Assume that $P$ is absolutely continuous with respect to $Q$ so that the Radon-Nikodym derivative $P(x)/Q(x)$ is well-defined for $x \in\Omega$.
	
	For $\alpha\in(1,\infty)$, we define the R\'enyi Divergence of order $\alpha$ of $P$ with respect to $Q$ as:
	$$ D_\alpha(P||Q)\coloneqq \frac{1}{\alpha-1}\log \Exp_{X\gets P}\left[ \left(\frac{P(X)}{Q(x)}\right)^{\alpha-1} \right]$$
	We also define
	$$D_*(P||Q)\coloneqq \sup_{\alpha\in(1,\infty)} \frac{1}{\alpha}D_\alpha(P||Q)$$
\end{definition}

\begin{definition}[Concentrated Differential Privacy~\cite{concDP1,concDP2}]
	A randomized algorithm $M:\mathcal{X}^*\to\mathcal{Y}$ satisfies $\frac{1}{2}\varepsilon$-concentrated differential privacy iff, for all $x,x'\in\mathcal{X}$ differing by the addition or removal of a single user's records, we have $D_*(M(x)||M(x'))\leq \frac{1}{2}\varepsilon^2$.
\end{definition}

\begin{definition}[R\'enyi Differential Privacy~\cite{renyidp}]
A randomized algorithm $M:\mathcal{X}^*\to\mathcal{Y}$ satisfies $(\alpha,\varepsilon)$-R\'enyi differential privacy iff, for all $x,x'\in\mathcal{X}$ differing by the addition or removal of a single user's records, we have $D_\alpha(M(x)||M(x'))\leq \frac{1}{2}\varepsilon^2$.
\end{definition}

\begin{definition}[Differential Privacy~\cite{DP}]
A randomized algorithm $M:\mathcal{X}^*\to\mathcal{Y}$ satisfies $(\varepsilon,\delta)$-differential privacy iff, for all $x,x'\in\mathcal{X}$ differing by the addition or removal of a single user's records, we have $$\Pr[M(x)\in E]\leq e^\varepsilon\Pr[M(x')\in E] + \delta,$$
for all events $E\subset Y$.
We refer to $(\varepsilon,0)$-DP as pure DP and $(\varepsilon,\delta)$-DP for $\delta>0$ as approximate DP.
\end{definition}

We remark that $\frac{1}{2}\varepsilon^2$-concentrated DP is equivalent to satisfying $(\alpha, \frac{1}{2}\varepsilon^2\alpha)$-R\'enyi DP simultaneously for all  $\alpha\in(1,\infty)$.
We also have the following conversion lemma from concentrated to approximate DP~\cite{hypteseting,DiscGauss,enhancedGuar}.
\begin{lemma}
	If $M$ satisfies $(\varepsilon,0)$-DP, then it satisfies $\frac{1}{2}\varepsilon^2$-concentrated DP.
	If $M$ satisfies $\frac{1}{2}\varepsilon^2$-DP then, for any $\delta>0$, $M$ satisfies $(\varepsilon_{\mathit{aDP}}(\delta),\delta)$-DP, where
		$$\varepsilon_{\mathit{aDP}}(\delta)=\inf_{\alpha>1}\frac{1}{2}\varepsilon^2\alpha +\frac{\log (1/\alpha\delta)}{\alpha-1}+\log(1-1/\alpha)\leq\varepsilon\cdot(\sqrt{2\log(1/\delta)}+\varepsilon/2).$$
\end{lemma}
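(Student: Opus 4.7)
The plan is to prove the two implications separately using standard arguments from the concentrated DP literature (e.g., Bun--Steinke, Dwork--Rothblum, Canonne--Kamath--Steinke).

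First, for the pure DP to concentrated DP direction, the key observation is that $(\varepsilon,0)$-DP is equivalent to the privacy loss random variable $L(y) := \log(P(y)/Q(y))$ being almost-surely bounded in absolute value by $\varepsilon$ under $P = M(x)$ and $Q = M(x')$. I would show $D_\alpha(P \| Q) \leq \tfrac{1}{2}\varepsilon^2 \alpha$ for every $\alpha \in (1,\infty)$. Writing $D_\alpha(P\|Q) = \frac{1}{\alpha-1}\log \mathbb{E}_Q[e^{\alpha L}]$ and using the change-of-measure identity $\mathbb{E}_Q[e^L] = 1$, the question reduces to maximizing $\mathbb{E}_Q[e^{\alpha L}]$ over distributions of $L$ supported on $[-\varepsilon,\varepsilon]$ subject to $\mathbb{E}_Q[e^L]=1$. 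By convexity, the extremal distribution is two-point on $\{-\varepsilon, +\varepsilon\}$, and a direct calculation on this two-point law together with the elementary inequality $\log\cosh(x) \leq x^2/2$ yields the bound $\mathbb{E}_Q[e^{\alpha L}] \leq e^{\alpha(\alpha - 1)\varepsilon^2/2}$, which is exactly what is needed.

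For the concentrated DP to approximate DP direction (taking the statement's ``$\tfrac{1}{2}\varepsilon^2$-DP'' to mean $\tfrac{1}{2}\varepsilon^2$-concentrated DP), the plan is to fix $\alpha > 1$, partition the output space by the privacy-loss magnitude, and apply Markov's inequality to the moment generating function. Concretely, for any measurable event $E$, decompose
\[
P(E) \;=\; P\bigl(E \cap \{L \leq t\}\bigr) \;+\; P\bigl(E \cap \{L > t\}\bigr),
\]
bound the first term by $e^t Q(E)$ using the definition of $L$, and bound the second term via Markov on $e^{(\alpha-1)L}$, using the RDP guarantee $\mathbb{E}_P[e^{(\alpha-1)L}] = e^{(\alpha-1)D_\alpha(P\|Q)} \leq e^{(\alpha-1)\alpha\varepsilon^2/2}$. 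Solving so the tail term equals $\delta$ gives $t = \tfrac{1}{2}\varepsilon^2\alpha + \tfrac{\log(1/\delta)}{\alpha-1}$, which is the basic conversion; the additional $\log(1-1/\alpha)$ refinement and the $\log(1/\alpha\delta)$ in place of $\log(1/\delta)$ come from a sharper treatment that replaces the crude Markov bound by a Cauchy--Schwarz-type conditional Rényi argument (splitting the probability into its ``large privacy loss'' portion and reapplying Hölder). This is where I expect the main technical obstacle to lie --- the cosmetic terms $\log(1-1/\alpha)$ and $1/\alpha$ inside the logarithm require keeping track of conditional moments carefully rather than simply bounding tails.

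Finally, to establish the closed-form upper bound $\varepsilon\bigl(\sqrt{2\log(1/\delta)} + \varepsilon/2\bigr)$, I would choose $\alpha = 1 + \sqrt{2\log(1/\delta)}/\varepsilon$ in the infimum expression. Plugging in, the first term gives $\tfrac{1}{2}\varepsilon^2 + \varepsilon\sqrt{\tfrac{1}{2}\log(1/\delta)}$, the middle term contributes $\varepsilon\sqrt{\tfrac{1}{2}\log(1/\delta)}$ up to lower-order corrections, and the $\log(1-1/\alpha) \leq 0$ term can be dropped to obtain an upper bound. A routine algebraic simplification combines these to yield $\varepsilon\bigl(\sqrt{2\log(1/\delta)} + \varepsilon/2\bigr)$, completing the proof.
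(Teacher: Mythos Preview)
The paper does not actually prove this lemma; it states it as a known conversion result and cites prior work (\cite{hypteseting,DiscGauss,enhancedGuar}) for the proof. Your proposal correctly reconstructs the standard arguments from that literature (the two-point extremal argument for pure DP $\Rightarrow$ zCDP following Bun--Steinke, and the Markov/H\"older tail-splitting for zCDP $\Rightarrow$ approximate DP with the Canonne--Kamath--Steinke refinement), so there is nothing to compare against in the paper itself.
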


\paragraph{Discrete Gaussian}
Here we define the Discrete Gaussiasn~\cite{DiscGauss} and give its DP guarantees.

\begin{definition}[Discrete Gaussian]
	The discrete Gaussian with scale parameter $\sigma>0$ and location parameter $\mu\in\Z$ is a probability distribution supported on the integers $\Z$ denoted by $\mathcal{N}_\Z(\mu,\sigma^2)$ and defined by
	$$\forall x\in \Z \quad \Pr_{X\gets\mathcal{N}_\Z(\mu,\sigma^2)}(X=x) = \frac{\exp\left(\frac{-(x-\mu)^2}{2\sigma^2}\right)}{\sum_{y\in\Z}\exp\left(\frac{-(y-\mu)^2}{2\sigma^2}\right)}.$$
\end{definition}

\begin{proposition}[\cite{DDGFL}, Proposition 14]
	Let $\sigma\geq \frac{1}{2}$.
	Let $X_{I,j}\gets\mathcal{N}_\Z(0,\sigma^2)$ independently for each $i$ and $j$.
	Let $X_i=(X_{i,1},\dots,X_{i,d})\in\Z^d$.
	Let $Z_n=\sum_{i=1}^n X_i\in\Z^d$.
	Then, for all $\Delta\in\Z^d$ and all $\alpha\in(1,\infty)$,
	\begin{align*}
		D_\alpha(Z_n||Z_n+\Delta) \leq & \min\{\frac{\alpha||\Delta||_2^2}{2n\sigma^2}+\tau d, \\ 
		& \frac{\alpha}{2}\cdot \left(\frac{||\Delta||_2^2}{n\sigma^2}+2\frac{||\Delta||_1}{\sqrt{n}\sigma}\cdot\tau+\tau^2d\right),\\
		&  \frac{\alpha}{2}\cdot \left(\frac{||\Delta||_2}{\sqrt{n}\sigma}+\tau\sqrt{d}\right)^2\}
	\end{align*}
	where $\tau\coloneqq10\cdot\sum_{k=1}^n e^{-2\pi^2\sigma^2\frac{k}{k+1}}$. An algorithm $M$ that adds $Z_n$ to a query with $\ell_p$ sensitivity $\Delta_p$ satisfies $\frac{1}{2}\varepsilon^2$-concentrated DP for
	\begin{align*}
		\varepsilon = & \min\{\sqrt{\frac{||\Delta||_2^2}{n\sigma^2}+2\tau d}, \\ 
		& \sqrt{\frac{\Delta_2^2}{n\sigma^2}+2\frac{\Delta_1}{\sqrt{n}\sigma}\cdot\tau+\tau^2d},\\
		&  \frac{\Delta_2}{\sqrt{n}\sigma}+\tau\sqrt{d}\}
	\end{align*}
\end{proposition}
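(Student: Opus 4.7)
The plan is to reduce the bound on $D_\alpha(Z_n\|Z_n+\Delta)$ to two ingredients: a Rényi-divergence bound for a single (possibly larger-scale) discrete Gaussian, and a ``closeness'' statement asserting that the sum $Z_n$ of $n$ iid discrete Gaussians is extremely close to a genuine discrete Gaussian $Y\sim\mathcal{N}_\Z(0,n\sigma^2)$. The $\tau$ term in the statement is exactly the tail arising from this second ingredient, and the three alternatives inside the $\min$ come from three ways of combining the ingredients using $\ell_1$ versus $\ell_2$ norms of $\Delta$.

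First, I would prove an additivity-approximation lemma: for independent $X_1\sim\mathcal{N}_\Z(0,\sigma_1^2)$ and $X_2\sim\mathcal{N}_\Z(0,\sigma_2^2)$, and $Y\sim\mathcal{N}_\Z(0,\sigma_1^2+\sigma_2^2)$, the statistical distance $\SD(X_1+X_2,Y)$ is at most $10\exp\!\bigl(-2\pi^2\tfrac{\sigma_1^2\sigma_2^2}{\sigma_1^2+\sigma_2^2}\bigr)$. The argument uses Poisson summation applied to the discrete Gaussian density: the deviation from exact additivity is captured by a Jacobi theta-function tail, which decays at exactly this doubly-exponential rate (this is where the condition $\sigma\geq 1/2$ enters, to make the tail usefully small). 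Second, I iterate this lemma: combining $X_1+\cdots+X_k$ (variance $k\sigma^2$) with a fresh $X_{k+1}$ (variance $\sigma^2$) contributes an error $10\exp(-2\pi^2\sigma^2\cdot k/(k+1))$, and summing over $k=1,\dots,n-1$ (and taking a union bound over the $d$ coordinates) gives $\SD(Z_n,Y)\leq \tau$ in each coordinate, where $Y$ has the product form $Y\sim \mathcal{N}_\Z(0,n\sigma^2)^{\otimes d}$. Third, I invoke the sharp single-discrete-Gaussian Rényi bound (CKS~Theorem 3), namely $D_\alpha(\mathcal{N}_\Z(0,n\sigma^2)^{\otimes d}\|\mathcal{N}_\Z(\Delta,n\sigma^2)^{\otimes d})\leq \alpha\|\Delta\|_2^2/(2n\sigma^2)$, which follows directly from expanding the moment generating function of the shift.

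Next I combine these ingredients. The key technical move, which I expect to be the main obstacle, is lifting the statistical-distance bound $\SD(Z_n,Y)\leq \tau$ to a Rényi-divergence bound on the shifted distributions, since Rényi divergence is not continuous in total variation in general. The idea is that Poisson summation gives not merely a TV bound but in fact a pointwise multiplicative approximation of the density of $Z_n$ by that of $Y$ (up to factors of $1\pm\tau$ per coordinate). Writing the Rényi divergence as $\Exp_{Z_n}\bigl[(\Pr[Z_n]/\Pr[Z_n+\Delta])^{\alpha-1}\bigr]$, I split into the ``good'' region where $Z_n$ and $Y$ couple exactly and a tail region of mass $\tau$. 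In the good region the ratio equals the single-Gaussian ratio, producing the $\alpha\|\Delta\|_2^2/(2n\sigma^2)$ term; in the tail region I bound the contribution by $\tau$ times a worst-case log-ratio, which in turn is controlled coordinatewise. The three alternatives inside the $\min$ arise by bounding the cross-term $\Exp[\log(P/Q)\cdot\mathbf{1}_{\text{tail}}]$ with $\ell_1$ versus $\ell_2$ sensitivity (giving the $\tau\|\Delta\|_1/(\sqrt{n}\sigma)$ cross term in the middle bound and the square-completed $(\|\Delta\|_2/(\sqrt{n}\sigma)+\tau\sqrt{d})^2$ expression in the third), while the first bound is obtained by directly concavity-bounding the tail contribution by $\tau d$.

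Finally, the DP statement is immediate: by definition, $\frac{1}{2}\varepsilon^2$-concentrated DP requires $D_\alpha(M(x)\|M(x'))\leq \frac{1}{2}\varepsilon^2\cdot\alpha$ for all $\alpha>1$, so dividing each of the three bounds on $D_\alpha$ above by $\alpha$ and taking the $\min$ gives the three choices of $\frac{1}{2}\varepsilon^2$, and hence the three stated values of $\varepsilon$; the $\ell_1$ sensitivity $\Delta_1$ and $\ell_2$ sensitivity $\Delta_2$ correspond respectively to the worst-case $\|\Delta\|_1,\|\Delta\|_2$ over adjacent datasets.
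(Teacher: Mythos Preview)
The paper does not prove this proposition at all: it is stated verbatim as a citation of \cite{DDGFL}, Proposition~14, and is then invoked as a black box in the proof of Theorem~\ref{thm:privacy}. There is therefore no ``paper's own proof'' to compare against; your proposal is a reconstruction of the argument in the cited source rather than of anything in the present paper.

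That said, your sketch is broadly faithful to how the result is proved in \cite{DDGFL}: the additivity-approximation lemma for discrete Gaussians (via Poisson summation / theta-function tails), the telescoping sum over $k=1,\dots,n-1$ that produces exactly the $\tau$ expression, and the single-discrete-Gaussian R\'enyi bound from \cite{DiscGauss} are precisely the ingredients. One point worth sharpening: the lift from the closeness of $Z_n$ and $Y$ to a R\'enyi bound does \emph{not} go through statistical distance and a good/tail region split as you describe. What the Poisson-summation argument actually gives is a uniform bound on the \emph{max-divergence} (i.e., $D_\infty$) between $Z_n$ and $Y$ in each coordinate, and R\'enyi divergences then compose additively with this pointwise log-ratio bound via a weak triangle inequality. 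Your ``pointwise multiplicative approximation of the density'' remark is the right object, but the subsequent coupling/tail-mass argument is not how the bound is assembled; working directly with $D_\infty$ is both simpler and what makes the three displayed alternatives fall out cleanly.
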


\paragraph{Proof of Theorem~\ref{thm:privacy}}
\begin{proof}
    First, it is sufficient to show that the computation $\mat{C}\mat{G}+\mat{Z}$ satisfies $\frac{1}{2}\varepsilon^2$-concentrated DP, due to the post processing property of DP.
    Now consider two datasets $\mat{G}$ and $\mat{H}$ differing in one data record according to participation schema $\schema$.\footnote{$\mat{G}$ and $\mat{H}$ really consist of entries that are sums of records.}
    By assumption in the theorem statement, we have
    $$\sensone=\Delta, \quad \text{and thus} \quad \sens = c'\cdot\Delta,$$
    where ${c'}$ is the bound on the $\ell_2$ norm of individual gradient vectors that are aggregated.
    Since we use the randomized rounding techniques from Section~\ref{sec:discret}, gradients that are clipped to $\ell_2$ norm $c$ can actually end up having $\ell_2$ norm $c'=\hat{c}$ after rounding, where $\hat{c}$ is as in the theorem statement.
    With the bound on the total sensitivity above, we know from~\citet[Proposition 14]{DDGFL} (reproduced above) that the computation is $\frac{1}{2}\varepsilon^2$-concentrated DP, with the $\varepsilon$ from the theorem statement.
    \end{proof}

\subsection{Proof of Theorem~\ref{thm:asymp-err}}

We first prove the following exact result for the error:
\begin{theorem}\label{thm:error}
    Let $\beta\in[0,1)$, $\sigma^2\geq \gamma/2> 0$, and $c>0$.
    Let $n,d\in\N$ and $\rho \geq 1$.
    Let $\block{g}_{T,i}\in\R^d$ with $||\block{g}_{T,i}||_2\leq c$ for each $T\in[T^*],i\in[n]$.
    Let $U\in\R^{d\times d}$ be a random unitary matrix such that
    $$\forall \block{x}\in\R^d\quad \forall i\in [d] \quad \forall t\in \R\quad  \Exp[\exp(t(Ux)_i)]\leq \exp(t^2\rho||x||_2^2/2d).$$

    Let 
    \begin{align*}
&\Delta=\sensone\\
&\tau=10\cdot\sum_{k=1}^{n-1}e^{-2\pi^2\frac{\sigma^2}{\gamma^2}\cdot\frac{k}{k+1}}\\
&\hat{c}^2 = \min\left\{c^2 + \frac{1}{4}\gamma^2 d + \sqrt{2 \log(1/\beta)}\cdot\gamma\cdot(c+\frac{1}{2}\gamma d), (c+\gamma\sqrt{d})^2\right\}\\
&\varepsilon=\min\left\{\sqrt{\frac{\Delta^2\hat{c}^2}{\numParties \sigma^2}+2\tau d},\frac{\Delta \hat{c}}{\sqrt{\numParties}\sigma} + \tau\sqrt{d}\right\}.
    \end{align*}

    Then $\PFL$ satisfies $\frac{1}{2}\varepsilon^2$-concentrated differential privacy.
    
    Let
    \begin{align*}
        \hat{\sigma}^2(x)  & \coloneqq  \frac{\rho\cdot ||\mat{A}_{[T,:]}||_2^2}{d} \sum_{\tau=1}^T\sum_{i=1}^\numParties ||\block{g}_{\tau,i}||_2^2+\left(\frac{\gamma^2\cdot||\mat{A}_{[T,:]}||_2^2}{4}+\sigma^2\cdot||\mat{B}_{[T,:]}||_2^2\right)\cdot \numParties \\
        & \leq \frac{\rho||\mat{A}_{[T,:]}||_2^2}{d}c^2 \numParties T + \left(\frac{\gamma^2\cdot||\mat{A}_{[T,:]}||_2^2}{4}+||\mat{B}||_2^2\cdot\sigma^2\right)\cdot \numParties 
    \end{align*}
    If $\hat{\sigma}^2(x)\leq r^2$ then 
    \begin{align*}
        \Exp\left[\left|\left|\PFL(x)-\mat{A}_{[T,:]}\left(\sum_{i=1}^\numParties \block{x}_i\right)\right|\right|_2^2\right] & \leq  \frac{d\numParties}{1-\beta}\Biggl(\frac{2\sqrt{2}\cdot r\cdot e^{-r^2/4\hat{\sigma}^2(x)}}{\sqrt{\numParties(1-\beta)^{\numParties T -1}}} \\ & + \biggl(||\mat{A}_{[T,:]}||_2^2\cdot\biggl(\frac{\gamma^2}{4}+\frac{\beta^2\cdot\gamma^2\numParties}{1-\beta}\biggr)  + ||\mat{B}_{[T,:]}||_2^2\cdot\sigma^2\biggr)^{1/2}\Biggr)^2.
    \end{align*}
\end{theorem}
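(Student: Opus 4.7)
The privacy claim is essentially a restatement of Theorem~\ref{thm:privacy}: the rejection step in Protocol~\ref{prot:clientproc} guarantees that each flattened and rounded client gradient has $\ell_2$ norm at most $\hat c$, so $\mat{C}\mat{G}$ has $\ell_2$-sensitivity $\Delta\hat c$ per user across its participation pattern. The aggregated per-coordinate noise is a sum of $n$ independent $\mathcal{N}_\Z(0,\sigma^2/\gamma^2)$ samples, and~\cite[Proposition~14]{DDGFL} (reproduced above) then yields exactly the stated $\frac12\varepsilon^2$-concentrated DP; multiplication by $\mat{B}$, unflattening by $U^\top$, and rescaling by $\gamma$ are all post-processing and so preserve the guarantee.

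For the error bound, I would write the server's raw release at iteration $T$ as
\begin{equation*}
\mat{Y}_T=\Bigl(\sum_{\tau=1}^T \mat{A}_{[T,\tau]}\sum_{i=1}^n \tilde{\block{g}}_{\tau,i}+\sum_{\tau=1}^T \mat{B}_{[T,\tau]}\sum_{i=1}^n \block{z}_{\tau,i}\Bigr)\bmod m,
\end{equation*}
where $\tilde{\block{g}}_{\tau,i}$ is the flattened, conditionally randomly rounded gradient from Protocol~\ref{prot:clientproc} and $\block{z}_{\tau,i}\sim\mathcal{N}_\Z(0,\sigma^2/\gamma^2)^d$. The final output is $\PFL(x)=\gamma U^\top \mat{Y}_T'$ where $\mat{Y}_T'$ is the integer representative of $\mat{Y}_T$ in $[-m/2,m/2)$. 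The plan is to condition on the ``no-wraparound'' event $E$ that the unreduced integer sum lies coordinatewise in $(-m/2,m/2)$. On $E$, the error $\PFL(x)-\mat{A}_{[T,:]}\sum_i\block{x}_i$ decomposes, by linearity of $U^\top$ and unitarity, into (i) a zero-mean randomized-rounding term of per-coordinate variance at most $\gamma^2\|\mat{A}_{[T,:]}\|_2^2/4$ by Definition~\ref{def:randround}, (ii) a conditional-rounding bias of magnitude $O(\beta\gamma n/(1-\beta))$ per coordinate arising from the rejection loop of Definition~\ref{def:condrandround}, and (iii) the discrete Gaussian noise of per-coordinate variance $n\sigma^2\|\mat{B}_{[T,:]}\|_2^2$. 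Summing these variances over $d$ coordinates and taking the square root reproduces the second summand inside the large parentheses of the stated bound.

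The main obstacle is controlling the wraparound tail and verifying that $\hat\sigma^2(x)$ is the correct sub-Gaussian proxy for a single coordinate of $\mat{Y}_T$. Using the assumed MGF bound on coordinates of $U\block{x}$ together with the MGF of a sum of discrete Gaussians (whose sub-Gaussian parameter is $\sigma^2/\gamma^2$ up to the additive term absorbed in the definition of $\tau$), each coordinate of the unreduced sum is sub-Gaussian with proxy exactly $\hat\sigma^2(x)$; a Chernoff bound then gives $\Pr[E^c]\le 2d\cdot e^{-r^2/(4\hat\sigma^2(x))}$ whenever $r\le \gamma m/2$. On $E^c$ the coordinatewise error is at worst $O(\gamma m)$, and because all $nT$ randomized-rounding steps are independently conditioned on the rejection event, each with success probability $\ge 1-\beta$, any unconditional expectation of an ``on-$E$'' quantity picks up an inflation factor of $1/(1-\beta)^{nT-1}$; balancing the tail mass against this prefactor produces the first summand $\propto r\,e^{-r^2/4\hat\sigma^2(x)}/\sqrt{n(1-\beta)^{nT-1}}$. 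Combining the two contributions inside one $(\cdot)^2$, multiplying by the $d$ coordinates obtained after applying $U^\top$, and by the outermost conditioning factor $1/(1-\beta)$, yields the claimed bound. The delicate bookkeeping is exactly the rejection prefactor $(1-\beta)^{nT-1}$ and the tight identification of $\hat\sigma^2(x)$ as the exact variance proxy (as opposed to a loose upper bound); both follow by lifting the single-iteration analysis of~\cite{DDGFL} coordinate-wise to the aggregated matrix-mechanism output.
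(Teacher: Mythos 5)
Your proposal follows essentially the same route as the paper's proof (which itself adapts Theorem 36 of~\cite{DDGFL} to the matrix mechanism): privacy is routed through the post-rounding $\ell_2$-sensitivity $\Delta\hat{c}$ fed into Proposition~14 of~\cite{DDGFL}, and the error bound is obtained by using the assumed MGF bound on $U$ to establish $\hat\sigma^2(x)$ as the per-coordinate sub-Gaussian proxy, decomposing the on-event error into rounding variance, conditional-rounding bias, and discrete-Gaussian noise, and controlling modular wraparound by sub-Gaussian concentration. The only cosmetic difference is that the paper invokes Proposition~35 of~\cite{DDGFL} directly for the wraparound term rather than your tail-probability-times-worst-case estimate, but this is the same concentration argument in slightly different bookkeeping.
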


We start with a modified version of Proposition 26 in~\cite{DDGFL}.
\begin{proposition}\label{prop:error_modified}
    Let $R_\gamma^G$ be as in Definition~\ref{def:condrandround} and $G=\{y\in\R^d: ||y||_2^2\leq \Delta^2\hat{c}^2$\}.
    Let $\PFL'(X)$ be $\PFL$ up to the point of modular clipping.
    Consider the parameters from Theorem~\ref{thm:error}.
    Then $\PFL'(X)$ satisfies $\frac{1}{2}\varepsilon^2$-concentrated differential privacy.
    Also the following holds.
    \begin{align*}
        \Exp\left[\left|\left|\PFL'(X)-\mat{A}_{[T,:]}\sum_{i=1}^\numParties \mat{X}_i\right|\right|_2^2\right] \leq ||\mat{A}_{[T,:]}||_2^2\cdot \left(\frac{\gamma^2\cdot d\cdot \numParties}{4(1-\beta)}+\left(\frac{\beta}{1-\beta}\gamma\sqrt{d}\numParties\right)^2\right) + ||\mat{B}_{[T,:]}||_2^2\cdot \numParties \cdot d \cdot \sigma^2.
    \end{align*}
    \begin{align*}
        \forall \block{t}\in\R^d\quad \Exp\left[\exp\left(\left\langle \block{t}, \PFL'(X)-  \mat{A}_{[T,:]}\sum_{i=1}^\numParties \mat{X}_i\right\rangle\right)\right]\leq \frac{\exp((\frac{\gamma^2\cdot||\mat{A}_{[T,:]}||_2^2}{8}+\frac{\sigma^2\cdot||\mat{B}_{[T,:]}||_2^2}{2})\cdot||\block{t}||_2^2\cdot \numParties)}{(1-\beta)^{\numParties T}}.
    \end{align*}
\end{proposition}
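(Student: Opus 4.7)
The plan is to mirror the structure of~\cite[Proposition 26]{DDGFL}, adapting it to accommodate the matrix-mechanism weights $\mat{A}_{[T,:]}$ and $\mat{B}_{[T,:]}$ that multiply the aggregated gradients and aggregated noise respectively. The statement has three claims: concentrated DP of $\PFL'$, the expected squared-error bound, and the MGF bound. For the DP claim, I would observe that the proof of Theorem~\ref{thm:privacy} already bounds the R\'enyi divergence of the pre-modular-clipping discrete-Gaussian output, which is exactly what $\PFL'$ releases; modular clipping is just a post-processing step that produces $\PFL$ from $\PFL'$, so the same $\tfrac{1}{2}\varepsilon^2$-concentrated DP guarantee transfers verbatim.

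For the expected squared-error bound, I would decompose
\begin{align*}
\PFL'(X) - \mat{A}_{[T,:]} \sum_{i=1}^\numParties \mat{X}_i &= E_A + E_B,\\
E_A &= \sum_{\tau=1}^{T}\mat{A}_{[T,\tau]} \sum_{i=1}^\numParties \bigl(\gamma U^\intercal \tilde{\block{g}}_{\tau,i} - \block{g}_{\tau,i}\bigr),\\
E_B &= \sum_{\tau=1}^{T} \mat{B}_{[T,\tau]} \sum_{i=1}^\numParties \gamma U^\intercal \block{z}_{\tau,i},
\end{align*}
where the server-side application of $\gamma U^\intercal$ undoes the scaling and flattening done by each client in Protocol~\ref{prot:clientproc}. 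Unconditionally, each randomized rounding $R_\gamma$ is unbiased with coordinate variance at most $\gamma^2/4$, so by unitarity of $U$ and independence across $(\tau,i)$, the expected squared norm of $E_A$ is at most $\|\mat{A}_{[T,:]}\|_2^2\cdot \gamma^2 d\numParties/4$. Conditioning on the per-client acceptance event $G$ (which has probability at least $1-\beta$ by the choice of norm threshold in Protocol~\ref{prot:clientproc}) inflates each second moment by at most $1/(1-\beta)$ and introduces a coordinate bias of magnitude at most $\beta\gamma/(1-\beta)$, accumulating to the $(\beta\gamma\sqrt{d}\numParties/(1-\beta))^2$ term through $\mat{A}_{[T,:]}$. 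For $E_B$, the discrete Gaussian $\mathcal{N}_\Z(0,\sigma^2/\gamma^2)$ rescaled by $\gamma$ has coordinate variance at most $\sigma^2$, so $\Exp\|E_B\|_2^2\leq \|\mat{B}_{[T,:]}\|_2^2\cdot d\cdot \numParties\cdot \sigma^2$. The cross term vanishes by independence of rounding and noise.

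For the MGF bound, I would use that unconditionally each coordinate of $\gamma U^\intercal\tilde{\block{g}}_{\tau,i}$ is sub-Gaussian with parameter $\gamma^2/4$ (Hoeffding applied to a $[0,\gamma]$-bounded rounding combined with unitarity of $U$) and each coordinate of $\gamma U^\intercal \block{z}_{\tau,i}$ is sub-Gaussian with parameter $\sigma^2$ (standard MGF bound for the discrete Gaussian). Independence across clients, iterations, and dimensions together with linearity then yields an unconditional bound $\Exp[\exp(\langle\block{t},E_A+E_B\rangle)]\leq \exp\bigl((\gamma^2\|\mat{A}_{[T,:]}\|_2^2/8+\sigma^2\|\mat{B}_{[T,:]}\|_2^2/2)\|\block{t}\|_2^2\numParties\bigr)$. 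The joint acceptance event across all $\numParties T$ rounding calls has probability at least $(1-\beta)^{\numParties T}$, so dividing by this probability upper bounds the conditional MGF, which is exactly the inequality to be proved.

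The main obstacle will be handling the conditioning on $G$ cleanly, since it couples the coordinates of each client's rounded output and destroys exact unbiasedness. Following~\cite{DDGFL}, I would control this via three ingredients: (i) the rejection probability bound $\Pr[\text{reject}]\leq \beta$, which follows from the chosen norm threshold and a concentration argument for $R_\gamma$; (ii) the resulting bound $\Exp[f\mid G]\leq \Exp[f]/(1-\beta)$ for nonnegative $f$, giving the variance/MGF inflation factor per client; and (iii) a tail bound for $R_\gamma$ to bound the conditional bias contribution. Once these are in place, the remaining calculations are routine given the linear structure of $E_A + E_B$ in the weights $\mat{A}_{[T,:]}$ and $\mat{B}_{[T,:]}$.
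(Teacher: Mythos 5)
Your proposal matches the paper's proof in all essential respects: it uses the same decomposition of the error into an $\mat{A}_{[T,:]}$-weighted conditional-rounding term and a $\mat{B}_{[T,:]}$-weighted discrete-Gaussian noise term, bounds the rounding second moment and MGF unconditionally via Hoeffding's lemma and then inflates by $1/(1-\beta)$ (respectively $1/(1-\beta)^{\numParties T}$) to pass to $R_\gamma^G$, and uses the standard sub-Gaussian MGF bound for the discrete Gaussians, with the DP claim delegated to Proposition~14 of~\cite{DDGFL}. The only cosmetic difference is that you carry the unitary $U$ inside the Proposition's decomposition, whereas the paper defers the flattening analysis to the proof of Theorem~\ref{thm:error}; this does not change the argument.
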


\begin{proof}
    First, the differential privacy claim follows from~\citet[Proposition 14]{DDGFL}.
    
    Now, for the utility analysis, we have
    \begin{align*}
        \Exp\left[\left\|\PFL'(X)-\mat{A}_{[T,:]} \sum_{i=1}^\numParties \mat{X}_i\right\|_2^2\right] &= \Exp\left[\left\|\sum_{\tau=1}^T \mat{A}_{T,\tau} \cdot \left(\sum_{i=1}^\numParties (R_\gamma^G(\block{g}_{\tau,i})-\block{g}_{\tau,i})\right) + \mat{B}_{T,\tau}\cdot \sum_{i=1}^\numParties \gamma\cdot \block{z}_{\tau, i}\right\|_2^2\right] \\
        &\leq \sum_{\tau=1}^T\mat{A}_{T,\tau}^2\cdot \Exp\left[\left\|\sum_{i=1}^\numParties R_\gamma^G(\block{g}_{\tau,i})-\block{g}_{\tau,i}\right\|_2^2\right] + \mat{B}_{T,\tau}^2 \cdot \numParties \cdot \sigma^2 \\
        &\leq \left\|\mat{A}_{[T,:]}\right\|_2^2 \cdot \left(\frac{\gamma^2\cdot d\cdot \numParties}{4(1-\beta)}+\left(\frac{\beta}{1-\beta}\gamma\sqrt{d}\numParties\right)^2\right)  + \left\|\mat{B}_{[T,:]}\right\|_2^2 \cdot \numParties\cdot \sigma^2,
    \end{align*}
    where the last inequality is due directly to Proposition 26 of~\cite{DDGFL}.

    Now, for each $i\in[\numParties],\tau\in[T]$, we have that $R_\gamma(\block{g}_{\tau,i})\in \gamma \lfloor \block{g}_{\tau,i}/\gamma\rfloor + \{0,\gamma\}^d$ and is a product distribution with mean $\block{g}_{\tau,i}$.
    Thus, $R_\gamma(\block{g}_{\tau,i}) - \block{g}_{\tau,i} \in \{0,\gamma\}^d$ and is a product distribution with mean $\block{0}$.
    Therefore, by Hoeffding's lemma, we have:
    \begin{align*}
        \forall \block{t}\in\R^d\quad & \Exp[\exp(\langle \block{t}, \sum_{\tau=1}^T \mat{A}_{T,\tau} \sum_{i=1}^{\numParties} R_\gamma(\block{g}_{\tau,i})-\block{g}_{\tau,i}\rangle)] \leq \exp(\frac{\gamma^2}{8}\cdot \numParties\cdot||\mat{A}_{[T,:]}||_2^2\cdot||\block{t}||_2^2).
    \end{align*}
    Thus, 
    \begin{align*}
        \forall \block{t}\in\R^d\quad \Exp[\exp(\langle \block{t}, \sum_{\tau=1}^T \mat{A}_{T,\tau} \sum_{i=1}^{\numParties} R_\gamma^G(\block{g}_{\tau,i})-\block{g}_{\tau,i}\rangle)] & \leq \frac{\Exp[\exp(\langle \block{t}, \sum_{\tau=1}^T \mat{A}_{T,\tau} \sum_{i=1}^{\numParties} R_\gamma(\block{g}_{\tau,i})-\block{g}_{\tau,i}\rangle)]}{\Pr[R_\gamma(\block{g}_{\tau,i})\in G~ \forall \tau,i]} \\
        & \leq \frac{\exp(\frac{\gamma^2}{8}\cdot \numParties\cdot||\mat{A}_{[T,:]}||_2^2\cdot||\block{t}||_2^2)}{(1-\beta)^{n T}}.
    \end{align*}

    Moreover, we have that~\cite{DiscGauss}:
    \begin{align*}
        \forall \block{t} \in \R^d\quad \Exp[\exp(\langle \block{t}, \sum_{\tau=1}^T \mat{B}_{T,\tau}\sum_{i=1}^\numParties \gamma\cdot \block{z}_{\tau,i} \rangle)] \leq  \exp(\frac{\sigma^2}{2}\cdot\numParties \cdot ||\mat{B}_{[T:,]}||_2^2\cdot ||\block{t}||_2^2).
    \end{align*}
\end{proof}
Finally, we are able to prove a modified version of Theorem 36 from~\cite{DDGFL}.
\begin{proof}[Proof of Theorem~\ref{thm:error}]
    First, the differential privacy follows from Proposition~\ref{prop:error_modified} and the post-processing property of DP.

    Now, for the utility, by assumption, we have that
    \begin{align*}
        \forall \block{x}\in \R^d~\forall j\in[d]~\forall t\in\R\quad \Exp[\exp(t(\mat{U}x)_j)]\leq \exp(t^2\rho||\block{x}||_2^2/2d).
    \end{align*}
    Therefore,
    \begin{align*}
        \Exp[\exp(t\cdot (\sum_{\tau=1}^T\mat{A}_{T,\tau}\cdot (\mat{U}\sum_{i=1}^\numParties \block{g}_{\tau,i})_j)] & = \prod_{\tau=1}^T\cdot\prod_{i=1}^\numParties\Exp[\exp(t\cdot \mat{A}_{T,\tau}\cdot (\mat{U}\block{g}_{\tau,i})_j)] \\
        & \leq \prod_{\tau=1}^T\cdot\prod_{i=1}^\numParties\exp(t^2\cdot \mat{A}_{T,\tau}^2\cdot \rho \cdot ||\block{g}_{\tau,i}||_2^2/2d) \\
        & = \exp(t^2\cdot ||\mat{A}_{[T,:]}||_2^2 \cdot \rho \cdot \sum_{\tau=1}^T\sum_{i=1}^\numParties ||\block{g}_{\tau,i}||_2^2/2d).
    \end{align*}
    Combining with the result of Proposition~\ref{prop:error_modified}, we have
    \begin{align*}
        \forall t\in \R~\forall j\in [d]\quad \Exp[\exp(t\cdot (\Alg(\mat{U}x))_j)] & \leq \exp(\frac{t^2\cdot ||\mat{A}_{[T,:]}||_2^2 \cdot \rho}{2d} \cdot \sum_{\tau=1}^T\sum_{i=1}^\numParties ||\block{g}_{\tau,i}||_2^2) \\ & \cdot \frac{\exp((\frac{\gamma^2\cdot||\mat{A}_{[T,:]}||_2^2}{8}+\frac{\sigma^2\cdot||\mat{B}_{[T,:]}||_2^2}{2})\cdot t^2\cdot \numParties)}{(1-\beta)^{\numParties T}}
    \end{align*}
    Recall $\hat{\sigma}^2(x) = \frac{\rho\cdot ||\mat{A}_{[T,:]}||_2^2}{d} \sum_{\tau=1}^T\sum_{i=1}^\numParties ||\block{g}_{\tau,i}||_2^2+(\frac{\gamma^2\cdot||\mat{A}_{[T,:]}||_2^2}{4}+\sigma^2\cdot||\mat{B}_{[T,:]}||_2^2)\cdot \numParties$.

    By Proposition 35 of~\cite{DDGFL}, for all $j\in[d]$,
    \begin{align*}
        \Exp[(M_{[a,b]}(\PFL'(\mat{U}x))_j-\PFL'(\mat{U}x)_j)^2] \leq (b-a)^2\cdot\frac{1}{(1-\beta)^{\numParties T}}\cdot e^{-(b-a)^2/8\hat{\sigma}^2(x)}\cdot(e^{\frac{a^2-b^2}{4\hat{\sigma}^2}}+e^{\frac{b^2-a^2}{4\hat{\sigma^2}}}),
    \end{align*}
    where $a=-r$ and $b=r$ here.
    Summing over $j\in[d]$ gives
    \begin{align*}
        \Exp[||M_{[-r,r]}(\PFL'(\mat{U}x))-\PFL'(\mat{U}x)||_2^2] \leq 4r^2\cdot\frac{d}{(1-\beta)^{\numParties T}}\cdot e^{-r^2/2\hat{\sigma}^2(x)}\cdot 2
    \end{align*}

    Continuing with the proof from~\cite{DDGFL}, we get:
    \begin{align*}
        \Exp&[||{\PFL}(x)-\mat{A}_{[T,:]}\sum_{i=1}\mat{X}_i||_2^2] \\
        &\leq \Biggl((8r^2\cdot\frac{d}{(1-\beta)^{\numParties T}}\cdot e^{-r^2/2\hat{\sigma}^2(x)})^{1/2} + \biggl(||\mat{A}_{[T,:]}||_2^2\cdot\biggl(\frac{\gamma^2\cdot d\cdot \numParties}{4(1-\beta)} +\left(\frac{\beta}{1-\beta}\gamma \sqrt{d}\numParties\right)^2\biggr)  + \\ & ||\mat{B}_{[T,:]}||_2^2\cdot \numParties\cdot d\cdot \sigma^2\biggr)^{1/2}\Biggr)^2 \\
        &= \frac{d\numParties}{1-\beta}\Biggl(\frac{2\sqrt{2}\cdot r\cdot e^{-r^2/4\hat{\sigma}^2(x)}}{\sqrt{\numParties(1-\beta)^{\numParties T -1}}} + \biggl(||\mat{A}_{[T,:]}||_2^2\cdot\biggl(\frac{\gamma^2}{4}+\frac{\beta^2\cdot\gamma^2\numParties}{1-\beta}\biggr)  + ||\mat{B}_{[T,:]}||_2^2\cdot\sigma^2\biggr)^{1/2}\Biggr)^2.
    \end{align*}
\end{proof}

With this error bound, assuming that $\beta\leq 1/\sqrt{n}$ and $\hat{\sigma}^2(x)\leq r^2/4\log(r\sqrt{\numParties}/\gamma^2)$, we get 
\begin{align*}
    \Exp[||\tilde{\Alg}(x)-\mat{A}_{[T,:]}\sum_{i=1}\mat{X}_i||_2^2] \leq O(d\numParties((||\mat{A}_{[T,:]}||_2^2\cdot \gamma^2 + ||\mat{B}_{[T,:]}||_2^2\cdot\sigma^2)).
\end{align*}

\begin{proof}[Proof of Theorem~\ref{thm:asymp-err}]
    Note that $r=\frac{1}{2}\gamma m$.\alex{understand this more?}
    We verify that setting the parameters as specified yields $\frac{1}{2}\varepsilon^2$-concentrated DP and the desired accuracy.
    First, we have that
    \begin{align*}
        \varepsilon^2 \leq \frac{\Delta^2\hat{c}^2}{\numParties\sigma^2} + 2\tau d \leq \frac{\Delta^2(c+\gamma\sqrt{d})^2}{\numParties\sigma^2} + 20\numParties de^{-\pi^2(\sigma/\gamma)^2} \leq \frac{2\Delta^2c^2}{\numParties \sigma^2} + \frac{2d\Delta^2}{\numParties (\sigma/\gamma)^2} + 20\numParties de^{-\pi^2(\sigma/\gamma)^2}.
    \end{align*}
    Thus the privacy requirement is satisfied as long as $\sigma\geq 2c\Delta/\varepsilon\sqrt{\numParties}$ and $(\sigma/\gamma)^2\geq 8d\Delta^2/\varepsilon^2\numParties$, and $20\numParties de^{-\pi^2(\sigma/\gamma)^2}\leq \varepsilon^2/4$.
    So we can set
    \begin{align*}
        \sigma = \max \{\frac{2c\Delta}{\varepsilon\sqrt{\numParties}}, \frac{\gamma \Delta \sqrt{8d}}{\varepsilon\sqrt{\numParties}},\frac{\gamma}{\pi^2}\log(\frac{80\numParties d}{\varepsilon^2})\}=\tilde{\Theta}(\frac{c\Delta}{\varepsilon\sqrt{\numParties}}+\sqrt{\frac{d}{\numParties}}\cdot\frac{\gamma\Delta}{\varepsilon}+\gamma\log(\frac{\numParties d}{\varepsilon^2}).
    \end{align*}

    We set $\beta=\min\{1/n,1/2\}=\Theta(\frac{1}{n})$.

    Next,
    \begin{align*}
        \hat{\sigma}^2 & \leq \frac{\rho ||\mat{A}_{[T,:]}||_2^2}{d}c^2\numParties T+(\frac{\gamma^2||\mat{A}_{[T,:]}||_2^2}{4}+\sigma^2||\mat{B}_{[T,:]}||_2^2)\cdot\numParties \\
        & \leq \frac{\rho ||\mat{A}_{[T,:]}||_2^2}{d}c^2\numParties T+\gamma^2||\mat{A}_{[T,:]}||_2^2\numParties+\sigma^2||\mat{B}_{[T,:]}||_2^2\cdot\numParties \\
        & \leq O(\frac{\rho ||\mat{A}_{[T,:]}||_2^2}{d}c^2\numParties T+\gamma^2||\mat{A}_{[T,:]}||_2^2\numParties + ||\mat{B}_{[T,:]}||_2^2(\frac{c^2\Delta^2}{\varepsilon^2}+\frac{\gamma^2 d \Delta}{\varepsilon^2} + \gamma^2\numParties\log^2(\frac{\numParties d}{\varepsilon^2})) \\
        & \leq O(\frac{\rho ||\mat{A}_{[T,:]}||_2^2}{d}c^2\numParties T+||\mat{B}_{[T,:]}||_2^2\frac{c^2\Delta^2}{\varepsilon^2})) + \gamma^2\cdot O(||\mat{A}_{[T,:]}||_2^2\numParties + ||\mat{B}_{[T,:]}||_2^2(\frac{d \Delta}{\varepsilon^2} + \numParties\log^2(\frac{\numParties d}{\varepsilon^2})).
    \end{align*}

    Now we work out the asymptotics of the accuracy guarantee:
    \begin{align*}
        \Exp&[||\PFL(X)-\mat{A}_{[T,:]}\sum_{i=1}\mat{X}_i||_2^2] \\
        &\leq  \frac{d\numParties}{1-\beta}\Biggl(\frac{2\sqrt{2}\cdot r\cdot e^{-r^2/4\hat{\sigma}^2(x)}}{\sqrt{\numParties(1-\beta)^{\numParties T -1}}} + \biggl(||\mat{A}_{[T,:]}||_2^2\cdot\biggl(\frac{\gamma^2}{4}+\frac{\beta^2\cdot\gamma^2\numParties}{1-\beta}\biggr) + ||\mat{B}_{[T,:]}||_2^2\cdot\sigma^2\biggr)^{1/2}\Biggr)^2. \\
        & \leq O(\numParties d(\frac{re^{-r^2/4\hat{\sigma}^2}}{\sqrt{\numParties}}+\sqrt{||\mat{A}_{[T:,]}||_2^2\gamma^2+||\mat{B}_{[T:,]}||_2^2\sigma^2})) \\
        & \leq O(\numParties d(\frac{r^2e^{-r^2/2\hat{\sigma}^2}}{\numParties}+||\mat{A}_{[T:,]}||_2^2\gamma^2+||\mat{B}_{[T:,]}||_2^2\sigma^2)) \\
        & \leq O(\numParties d(\frac{\gamma^2 m^2}{n}\exp(\frac{-\gamma^2m^2}{8\hat{\sigma}^2})+||\mat{A}_{[T:,]}||_2^2\gamma^2+||\mat{B}_{[T:,]}||_2^2(\frac{c^2\Delta^2}{\varepsilon^2\numParties}+\frac{d\gamma^2\Delta^2}{\varepsilon^2\numParties}+\gamma^2\log^2(\frac{\numParties d}{\varepsilon^2})))) \\
        & \leq O(||\mat{B}_{[T:,]}||_2^2\frac{c^2\Delta^2d}{\varepsilon^2}+\gamma^2\numParties d(\frac{m^2}{\numParties}\exp(\frac{-\gamma^2m^2}{8\hat{\sigma}^2})+||\mat{A}_{[T:,]}||_2^2+||\mat{B}_{[T:,]}||_2^2(\frac{d\Delta^2}{\varepsilon^2\numParties}+\log^2(\frac{\numParties d}{\varepsilon^2}))))
    \end{align*}
    Similarly to the analysis of Theorem 2 in~\cite{DDGFL}, if 
    \begin{align*}
        m^2 &\geq O((||\mat{A_{[T:,]}}||_2^2\numParties+||\mat{B_{[T:,]}}||_2^2(\frac{d\Delta}{\varepsilon^2}+\numParties\log^2(\frac{\numParties d}{\varepsilon^2})))\cdot \log(1+m^2/n) \\
        & = \tilde{O}(||\mat{A_{[T:,]}}||_2^2\numParties+||\mat{B_{[T:,]}}||_2^2(\frac{d\Delta}{\varepsilon^2}+\numParties)),
    \end{align*}
    then we can set
    \begin{align*}
        \gamma^2=O(\frac{\rho||\mat{A}_{[T:,]}||_2^2c^2\numParties T}{d}+\frac{||\mat{B}_{[T:,]}||_2^2c^2\Delta^2}{\varepsilon^2})\cdot \frac{\log(1+m^2/n)}{m^2}
    \end{align*}
    so that $\frac{m^2}{n}\exp(\frac{-\gamma^2m^2}{8\hat{\sigma}}^2)\leq 1$.

    This gives us,
    \iftoggle{neurips}{
    \begin{align*}
        \Exp&[||\tilde{\Alg}(x)-\mat{A}_{[T,:]}\sum_{i=1}\mat{X}_i||_2^2] \\
        &\leq
        O(||\mat{B}_{[T:,]}||_2^2\frac{c^2\Delta^2d}{\varepsilon^2}+\gamma^2\numParties d(1+||\mat{A}_{[T:,]}||_2^2+||\mat{B}_{[T:,]}||_2^2(\frac{d\Delta^2}{\varepsilon^2\numParties}+\log^2(\frac{\numParties d}{\varepsilon^2})))) \\
        & \leq O(||\mat{B}_{[T:,]}||_2^2\frac{c^2\Delta^2d}{\varepsilon^2}+(\frac{\rho||\mat{A}_{[T:,]}||_2^2c^2\numParties T}{d}+\frac{||\mat{B}_{[T:,]}||_2^2c^2\Delta^2}{\varepsilon^2})\cdot \\ & \frac{\log(1+m^2/n)}{m^2}\numParties d(1+||\mat{A}_{[T:,]}||_2^2+||\mat{B}_{[T:,]}||_2^2(\frac{d\Delta^2}{\varepsilon^2\numParties}+\log^2(\frac{\numParties d}{\varepsilon^2})))) \\
        & \leq O(||\mat{B}_{[T:,]}||_2^2\frac{c^2\Delta^2d}{\varepsilon^2}+||\mat{B}_{[T:,]}||_2^2\frac{c^2\Delta^2d}{\varepsilon^2}(\frac{\log(1+m^2/n)}{m^2} \numParties \cdot (\rho||\mat{A}_{[T:,]}||_2^2 T + \\
        & 1+||\mat{A}_{[T:,]}||_2^2+||\mat{B}_{[T:,]}||_2^2(\frac{d\Delta^2}{\varepsilon^2\numParties}+\log^2(\frac{\numParties d}{\varepsilon^2}))))) \\
        & \leq O(||\mat{B}_{[T:,]}||_2^2\frac{c^2\Delta^2d}{\varepsilon^2}(1+\frac{\log(1+m^2/n)}{m^2} \numParties \\ &  \cdot (\rho||\mat{A}_{[T:,]}||_2^2 T + 1+||\mat{A}_{[T:,]}||_2^2+||\mat{B}_{[T:,]}||_2^2(\frac{d\Delta^2}{\varepsilon^2\numParties}+\log^2(\frac{\numParties d}{\varepsilon^2}))))).
    \end{align*}
    }{
    \begin{align*}
        \Exp&[||\tilde{\Alg}(x)-\mat{A}_{[T,:]}\sum_{i=1}\mat{X}_i||_2^2] \\
        &\leq
        O(||\mat{B}_{[T:,]}||_2^2\frac{c^2\Delta^2d}{\varepsilon^2}+\gamma^2\numParties d(1+||\mat{A}_{[T:,]}||_2^2+||\mat{B}_{[T:,]}||_2^2(\frac{d\Delta^2}{\varepsilon^2\numParties}+\log^2(\frac{\numParties d}{\varepsilon^2})))) \\
        & \leq O(||\mat{B}_{[T:,]}||_2^2\frac{c^2\Delta^2d}{\varepsilon^2}+(\frac{\rho||\mat{A}_{[T:,]}||_2^2c^2\numParties T}{d}+\frac{||\mat{B}_{[T:,]}||_2^2c^2\Delta^2}{\varepsilon^2})\\
        &\cdot \frac{\log(1+m^2/n)}{m^2}\numParties d(1+||\mat{A}_{[T:,]}||_2^2+||\mat{B}_{[T:,]}||_2^2(\frac{d\Delta^2}{\varepsilon^2\numParties}+\log^2(\frac{\numParties d}{\varepsilon^2})))) \\
        & \leq O(||\mat{B}_{[T:,]}||_2^2\frac{c^2\Delta^2d}{\varepsilon^2}+||\mat{B}_{[T:,]}||_2^2\frac{c^2\Delta^2d}{\varepsilon^2}(\frac{\log(1+m^2/n)}{m^2} \numParties \\
        & \cdot (\rho||\mat{A}_{[T:,]}||_2^2 T + 1+||\mat{A}_{[T:,]}||_2^2+||\mat{B}_{[T:,]}||_2^2(\frac{d\Delta^2}{\varepsilon^2\numParties}+\log^2(\frac{\numParties d}{\varepsilon^2}))))) \\
        & \leq O(||\mat{B}_{[T:,]}||_2^2\frac{c^2\Delta^2d}{\varepsilon^2}(1+\frac{\log(1+m^2/n)}{m^2} \numParties \cdot (\rho||\mat{A}_{[T:,]}||_2^2 T + 1+||\mat{A}_{[T:,]}||_2^2+||\mat{B}_{[T:,]}||_2^2(\frac{d\Delta^2}{\varepsilon^2\numParties}+\log^2(\frac{\numParties d}{\varepsilon^2}))))).
    \end{align*}}

    So, if
    \begin{align*}
        m^2 & \geq O(\log(1+m^2/n) \numParties \cdot (\rho||\mat{A}_{[T:,]}||_2^2 T + 1+||\mat{A}_{[T:,]}||_2^2+||\mat{B}_{[T:,]}||_2^2(\frac{d\Delta^2}{\varepsilon^2\numParties}+\log^2(\frac{\numParties d}{\varepsilon^2})))) \\
        & = \tilde{O}(\rho||\mat{A}_{[T:,]}||_2^2\numParties T+||\mat{B}_{[T:,]}||_2^2\frac{d\Delta^2}{\varepsilon^2}),
    \end{align*}
    then the mean squared error is $O(||\mat{B}_{[T:,]}||_2^2\frac{c^2\Delta^2d}{\varepsilon^2})$, as required.
    The final bound is obtained by simply summing the above over each round from $T=1$ to $T=\numRounds$.
\end{proof}

\alex[inline]{what is error for just noisy residuals?}
\section{DMM Security Model and Proof}\label{sec:reshare_sec}

\subsection{Security proofs}
We first provide an intuition on the current analysis for proving the security of cryptographic protocols.  In the security proof, we compare between an $n$-party function $f(x_1,\ldots,x_n) = (y_1,\ldots,y_n)$ and a protocol $P(x_1,\ldots,x_n)$ that allegedly privately computes the function $f$. 
Intuitively, a protocol $P$ correctly and privately computes $f$ if the following hold: (a) {\em Correctness:} For every input $\vec{x}=(x_1,\ldots,x_n)$, the output of the parties at the end of the protocol interaction $P$ is the same as $f(\vec{x})$; 
(b) {\em Privacy}: There exists a simulator $\mathcal{S}$ that receives the input and output of the corrupted parties, and can efficiently generate the messages that the corrupted parties received during the protocol execution. The simulator does not know the input/outputs of the honest parties. Intuitively, the fact that the messages sent by the honest parties can be generated from the input/output of the corrupted parties implies that these messages do not contain any additional information about the inputs of the honest parties besides what is revealed from the output of the computation. 

\subsection{Security Model}
We now introduce the formal security model.
We first note that we consider robustness checks on inputs out of the scope of our security model; i.e., we do not cover \emph{poisoning attacks},which have been extensively studied in the literature, e.g.,~\cite{poisoning1,poisoning2}.
Indeed, it is the case that malicious parties can input to the protocol whatever they want as their gradients and noise $\block{g},\block{z}$, which can lead to a meaningless model.

We follow the standard real/ideal world security paradigm of~\cite{GoldBook}.
Consider some multi-party protocol $\Pi$ that is executed by some parties $\Party_1,\dots,\Party_\totalParties$ that are grouped into committees $\Committee{1},\dots,\Committee{T^*}$ from iteration $1$ to iteration $T^*$ and a server $S$.
Note: the committees can be arbitrarily chosen, but our protocol only provides security if the assumption that the number of parties $\Att$ corrupts is at most $t_c$ holds;
in other words, we abstract out the committee selection process.\footnote{In practice, the committee selection is done by the server.}
Each of these parties has inputs $\block{x}_1,\dots,\block{x}_\totalParties$, and they want to evaluate some given \emph{functionality} $\Func$.
In our case, the functionality $\Func_{\mathsf{PPFL}}$ is resharing the inputs from all previous committees to the next committee, in each iteration, and then outputting the $\widehat{\mat{A}\block{X}}_T$ value to the sever in each iteration $T$, given some factorization $\mat{A}=\mat{B}\mat{C}$. 
The security of protocol $\Pi$ is defined by comparing the real-world execution of the protocol with an \emph{ideal}-world evaluation of $\Func$ by a trusted party (ideal functionality), who receives the inputs $\block{x}_1,\dots,\block{x}_\totalParties$ from the parties in the clear and simply sends the relevant parties their outputs $\Func(\block{x}_1,\dots,\block{x}_\totalParties)$ periodically.
There is an adversary $\Att$ that chooses to corrupt at most $t_c$ of the $n$ parties in each iteration, along with the server.
This adversary $\Att$ sees all of the messages and inputs and outputs of the corrupted parties and is allowed to act arbitrarily on their behalf.
Informally, it is required that for every adversary that corrupts some parties during the protocol execution, there is an adversary $\Sim$, also referred to as the \emph{simulator}, which can achieve the same effect and learn the same information in the ideal-world.
This simulator only sees what the corrupted parties send to the honest parties and the outputs, not the inputs $\block{x}$ of the honest parties. 
We now formally describe the security definition.

\paragraph{Real Execution.}
In the real execution, $\Pi$ is executed in the presence of the adversary $\Att$.
The \emph{view} of a party $\Party$ during an execution of $\Pi$, denoted by $\View_\Party^\Pi$ consists of the messages $\Party$ receives from the other parties during the execution and $\Party$'s input.
The execution of $\Pi$ in the presence of $\Att$ on inputs $(\block{x}_1,\dots,\block{x}_\totalParties)$ denoted $\Real_{\Pi,\Att}(\block{x}_1,\dots,\block{x}_\totalParties)$ is defined as $\{\View_{\Party}^\Pi\}_{\Party\in\CorrSet}$.
The output of $\Pi$ to the honest parties in the presence of $\Att$ on inputs $(\block{x}_1,\dots,\block{x}_\totalParties)$ is noted as $\Output$.

\paragraph{Ideal Execution.}
In the ideal execution, the parties and an ideal world adversary $\Sim$ interact with a trusted party (ideal functionality).
The ideal execution proceeds as follows:
As a committee $\Committee{T}$ comes online, the parties $\Party_{T,1},\dots,\Party_{T,\numParties}$ in that committee send their inputs $\block{x}_{T,1},\dots,\block{x}_{T,n}$ to the trusted party, who computes the output $\Func(\block{x}_{1,1},\dots,\block{x}_{T,n})$ to the server for that iteration.
$\Sim$ is also allowed to release a vector $\block{\chi}$, which will be added to the output, to simulate additive attacks.

\begin{definition}
	Protocol $\Pi$ securely computes $\Func$ if for every adversary $\Att$ there exists a simulator $\Sim$ such that 
	$$\SD((\{\View_{\Party}^\Pi\}_{\Party\in\CorrSet}, \Output),(\Sim(\{\block{x}_{T^*,j}\}_{T,j\in\CorrSet(T)},\Func(\block{x}_{1,1},\dots,\block{x}_{T^*,n}),\Func(\block{x}_{1,1},\dots,\block{x}_{T^*,n})+\block{\chi}))\leq \mathsf{negl}(\secparam),\footnote{$\mathsf{negl}(\secparam)$ is any function in $\secparam^{\omega(1)}$}$$
	where $\SD$ is the statistical distance between the two distributions, $\CorrSet(T)$ is the set of corrupted parties in iteration $T$, and $\secparam$ is the \emph{security parameter}.
\end{definition}

\subsection{Security Proof}
We now give the formal security proof.
\begin{theorem}[Security]
	$\PFL$ securely computes $\Func_{\mathsf{PPFL}}$ for $t_c+t_d < (1/2-\mu)n$, $0<\mu<1/2$. 
\end{theorem}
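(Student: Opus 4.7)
The plan is to prove security via the standard real/ideal simulation paradigm by constructing a PPT simulator $\Sim$ that, given only the inputs of the corrupted parties in each committee and the ideal-functionality outputs $\widehat{\mat{A}\mat{X}}_T$ (possibly perturbed by the adversary-chosen $\chi_T$), produces a view statistically indistinguishable from $\Att$'s real view. The core cryptographic fact I would invoke is the hiding property of the underlying packed secret sharing scheme: any set of $t_c$ shares of a length-$k$ secret vector is distributed as $t_c$ uniform and independent field elements, independent of the shared secret. Since every corrupted committee has at most $t_c$ parties, this will let $\Sim$ replace honest parties' shares with uniform randomness.

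The simulation proceeds iteration by iteration. In Round 1 of iteration $T$, $\Sim$ samples the shares $\packshr{\block{z}_{T,h,\ell}}$ and $\packshr{\block{g}_{T,h,\ell}}$ destined for corrupted parties from each honest party $\Party_h$ as uniform field elements; for corrupted $\Party_j$, $\Sim$ reads the actual shares it distributes. In Round 2, for the $\reshare$ step, $\Sim$ similarly simulates the honest-to-corrupted reshared packed sharings with uniform randomness, and records the (possibly malicious) shares sent by corrupted parties. For the reconstruction of $\packshr{\block{Y}_{T,\ell}}$ to the server, $\Sim$ queries the ideal functionality to obtain $\widehat{\mat{A}\mat{X}}_T+\chi_T$ and then programs the honest parties' reconstruction shares $\block{Y}_{T,\ell}^j$ by choosing the unique degree $t_c+k-1$ polynomial whose secret slots evaluate to the target value and which agrees with the already-fixed $t_c$ shares of the corrupted parties. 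The extraction of $\chi_T$ itself is done by computing, from the corrupted shares at each step, the additive deviation from the honest codeword on the $k$ secret points; this follows from the additive-attack characterization of linear secret sharing reconstruction referenced in the paper. Indistinguishability of the simulated view then reduces, by a standard hybrid argument over committees and iterations, to the hiding property invoked above.

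For correctness and dropout tolerance, the assumption $t_c+t_d<(1/2-\mu)n$ with packing $k=2\mu n$ gives $n-t_d > t_c+k$, so the surviving shares in each committee always suffice for $\recover$ to output valid packed sharings of the original aggregated secrets, as worked out in Section~\ref{sec:pss}. Combined with the linearity of packed Shamir sharing, this means the honest parties' aggregated shares across iterations consistently share the correct $\hat{\block{Z}}_\tau, \hat{\block{X}}_\tau$ values, so $\widehat{\mat{A}\mat{X}}_T$ is reconstructed at the server up to the additive $\chi_T$ that $\Sim$ already extracted.

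The main obstacle will be carefully bookkeeping the consistency of the simulator across the two nested layers of packed secret sharing used by $\PSS$: the inner packed sharings of individual secrets and the outer packed sharings of parties' share-vectors. In particular, when $\Sim$ programs honest reconstruction shares at iteration $T$ to match the ideal output, it must ensure these programmed values remain jointly consistent with the uniform shares it already handed to corrupted parties during the earlier $\reshare$ step, and that the extracted $\chi_T$ truly depends only on corrupted parties' messages and not on honest inputs. This is the step where the dimension bound $t_c+k-1< n-t_d-t_c$ is crucial: it guarantees enough ``degrees of freedom'' in each honest polynomial to simultaneously satisfy the prior corrupted-view constraints and the programmed secret, so the simulation remains perfectly consistent.
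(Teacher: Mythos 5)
Your proposal follows essentially the same route as the paper's own proof: you simulate honest-to-corrupted shares as uniform field elements (packed Shamir hiding against at most $t_c$ corruptions), you extract the additive error $\chi_T$ by comparing the shares corrupted parties actually send during $\reshare$ to the shares they should have sent, you program the honest reconstruction shares sent to the server by interpolating the unique degree-$(t_c+k-1)$ polynomial through the $t_c$ fixed corrupted shares and the $k$ ideal (perturbed) output slots, and you derive dropout resilience from $t_c+k\le (1/2+\mu)n < n-t_d-t_c$. These are exactly the steps the paper takes in Section~\ref{sec:reshare_sec}, including the inductive bookkeeping of corrupted parties' shares across iterations that you flag as the main obstacle.
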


\begin{proof}
We first build the simulator $\Sim$.
The simulator runs $\Att$ internally. 
We describe the simulator for the first iteration $T=1$ and then inductively for the rest.
Throughout, we will (inductively) show that the simulator knows all of the corrupted parties' shares.
We start with the case of a corrupted server $S$.
\paragraph{Corrupted Server}\alex{this could be improved in the future}
In iteration 1, $\Sim$ simulates the shares sent by honest parties of iteration $1$ to corrupted parties of iteration $1$ and $2$ by sampling random values from the field $\F$.
$\Sim$ receives on behalf of the honest parties in committee $\Committee{1}$ the shares sent by corrupted parties from the same committee.
Note that the honest shares completely (and exactly) define these sharings since the number of honest parties is exactly $t_c+k$, and thus $\Sim$ can compute the corrupted parties' shares and the underlying secret ${\block{{y}}}_i$.
With the corrupted parties' shares, along with the output for iteration $T=1$ (which $\Sim$ receives since the server is corrupted), $\Sim$ has $t_c+k$ points on the $(t_c+k-1)$-degree polynomial underlying each output $\packshr{\block{Y}_{1,\ell}}$, and thus can reconstruct the whole sharing.
Based on this, the simulator can send the honest parties' shares to the server.

In subsequent iterations $T>1$, $\Sim$ first simulates the shares of honest parties' new gradients and noise as above, along with the reshares to the next committee.
It also receives the corrupted parties' input shares from iteration $T$,  and can reconstruct the whole sharing $\packshr{\block{y}_i}$, 
in the same way as above.
$\Sim$ also receives on behalf of the honest parties in committee $\Committee{T}$ the reshared shares sent by corrupted parties from iteration $T-1$.
Note that again the honest shares completely (and exactly) define these sharings since the number of honest parties is exactly $t_c+k$, and thus $\Sim$ can compute the corrupted parties' shares as well as the actual underlying reshared shares $\tilde{\block{Z}}_1^i,\dots,\tilde{\block{Z}}_k^i$ of each corrupted party $\Party_i$ in $\Committee{T}$.
Note that these might be different from the actual underlying shares  $\hat{\block{Z}}_1^i,\dots,\hat{\block{Z}}_k^i$ of the corrupted parties which, $\Sim$ knows from above.
Thus, $\Sim$ can compute $\block{e}_m^i\gets\hat{\block{Z}}_{m}^i-\tilde{\block{Z}}_{m}^i$ for each $m\in[k]$.
Since reconstruction used within $\recover$ of $\PSS$ is just a constant $\lambda_i^j$ multiplied by a party's share, the error introduced here is $\lambda_i^j\cdot\block{e}_m^i$ for each $i\in\CorrSet(T)$.
Thus $\Sim$ sets $\block{\chi}_{T}\gets\sum_{i\in \CorrSet(1)} \lambda_i^j\cdot\block{e}_m^i$.
This error will be incorporated in honest parties' shares of $\widehat{\mat{A}\block{X}}$ for all iterations including and after $T$. 
Indeed, in iteration $T$, $\Sim$ uses the computed corrupted parties' shares, along with the output for iteration $T$ with the error $\block{\chi}_\tau$ for all previous iterations $\tau\leq T$ added in, to obtain $t_c+k$ values with which it can reconstruct the whole sharing as above, and thus the shares that honest parties send to the server.

Now we show that this is a good simulation.
By the properties of Shamir Secret Sharing, we know that the at most $t_c$ shares that the adversary receives in the real world for every sharing will be distributed randomly.
Thus the shares that $\Sim$ sends are distributed the same way.
We also showed that the errors $\block{\chi}_T$ and the shares of the corrupted parties are computed exactly as in the real world.
Therefore $\Sim$ perfectly simulates the real world.\footnote{Note that both in the real and ideal world, if an honest party ever hears from less than $t_c+k$ parties from the previous iteration, they will abort since then more parties have dropped out (or have been dropped by the server) than can be tolerated.}

\paragraph{Honest Server}
In the case of an honest server, we can use all of the same simulation above, except instead of simulating the output shares to the server, we can use this same simulation to compute the error values $\block{\chi}_T$ added to the honest output.
We also need to show that, even in the presence of honest dropout parties, if the corrupted parties do not deviate from the protocol description, then the correct values are output to the server.
This is true since the number of honest parties that do not dropout is at least $\numParties-t_d > (1/2+\mu)\numParties$ and $k\leq 2\mu \numParties$.
Indeed, $t_c+k\leq (1/2+\mu)\numParties<n-t_d$, so the shares of the parties that do not dropout can still be used to reconstruct the secrets, both during $\recover$ of $\PSS$ and the actual output reconstruction to the server in each iteration.

This completes the security proof.
\end{proof}
\section{Additional Experimental Results}\label{sec:moreexps}
\begin{figure}[t!]
    \centering
    \includegraphics[width=\linewidth]{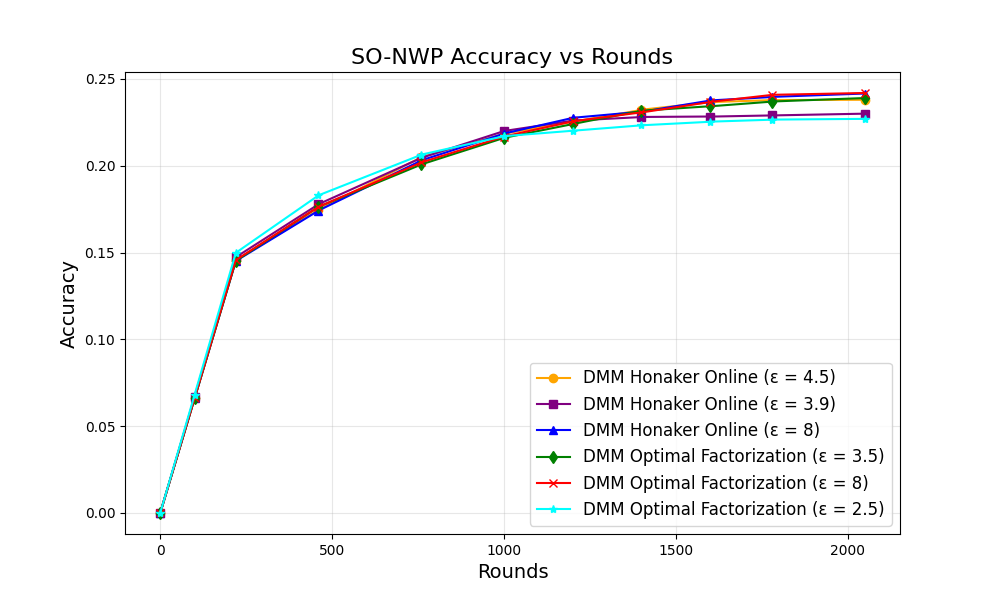}
    \caption{Accuracies during training for SO-NWP across different $\varepsilon$ for both the optimal factorization and Honaker Online facorization}
    \label{fig:so-rounds}
\end{figure}

\textbf{FEMNIST details.} Federated EMNIST is an image classification dataset containing 671,585 training handwritten digit/letter images over 64 classes grouped into $\totalParties = 3400$ clients by their writer.
We use the standard dataset split provided by TensorFlow.
As in~\cite{DDGFL}, we train a small convolutional net with two $3\times 3$ conv layers with $32/64$ channels followed by two fully connected layers with $128/62$ output units;
a $2\times 2$ max pooling layer and two dropout layers with drop rate $0.25/0.5$ are added after the first $3$ trainable layers, respectively.
The total number of parameters is $d=1018174$.
We use namely momentum $0.9$, $1$ client training epoch per iteration, client learning rate $\eta_c=0.02$, server learning rate $\eta_s=1$, and client batch size to $16$.

\textbf{SO-NWP details.}
Stack Overflow is a large-scale text dataset based on the question answering site Stack
Overflow. It contains over 108
training sentences extracted from the site grouped by the $N =
342477$ users, and each sentence has associated metadata such as tags. The task of 
SO-NWP involves predicting
the next words given the preceding words in a sentence
We use the standard dataset split provided by TensorFlow.
As in~\cite{DDGFL,MultiEpochFL}, we use the LSTM architecture defined
in~\cite{fedopt} directly, which has a model size of $d = 4050748$ parameters (slightly
under $2^{22}$).
We use namely momentum $0.9$, $1$ client training epoch per iteration, client learning rate $\eta_c=0.02$, server learning rate $\eta_s=1$, and client batch size to $16$.

\textbf{Accuracy Across Training iterations.}
In Figures~\ref{fig:so-rounds} and~\ref{fig:emnist-rounds}, we show how the accuracies of our different models vary across training iterations.
We show the results for the different matrix factorizations (Honaker Online and optimal) and different privacy values $\varepsilon$.

\begin{figure}[t!]
	\centering
	\includegraphics[width=\linewidth]{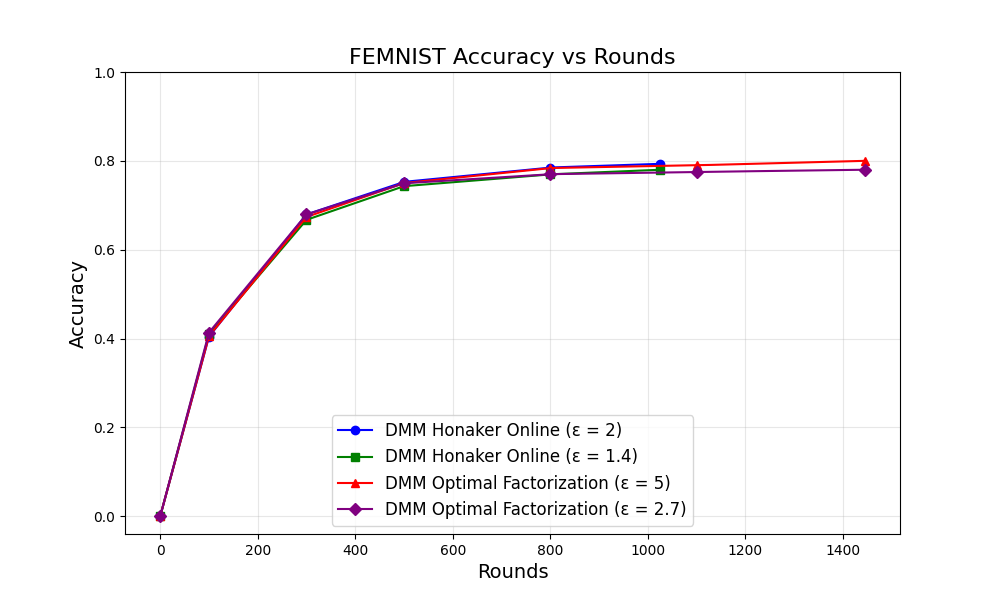}
	\caption{Accuracies during training for FEMNIST across different $\varepsilon$ for both the optimal factorization and Honaker Online facorization}
	\label{fig:emnist-rounds}
\end{figure}

\textbf{Privacy Guarantees with Dropouts and Corrupted Parties.} We note that, just as in~\cite{DDGFL}, our privacy guarantees degrade with corrupt parties and honest dropouts---the amount of combined noise in each iteration is proportional to $(1-\mu')n$ instead of $n$, where $\mu'$ is the fraction of parties that are corrupted or dropped out (recall that we assume $\mu'<(1/2-\mu)$, for $0<\mu<1/2$).
Indeed, the actual obtained $\varepsilon'$ value for DP scales the originally derived $\varepsilon$ value by a $\approx 1/(1-\mu')$ factor.
See~\citet[Figure 9]{DDGFL} for a graphical representation.

\iftoggle{neurips}{

\clearpage
\section*{NeurIPS Paper Checklist}

\begin{enumerate}

\item {\bf Claims}
    \item[] Question: Do the main claims made in the abstract and introduction accurately reflect the paper's contributions and scope?
    \item[] Answer: \answerYes{} 
    \item[] Justification: We provide theorems and experiments that show this is true.
    \item[] Guidelines:
    \begin{itemize}
        \item The answer NA means that the abstract and introduction do not include the claims made in the paper.
        \item The abstract and/or introduction should clearly state the claims made, including the contributions made in the paper and important assumptions and limitations. A No or NA answer to this question will not be perceived well by the reviewers. 
        \item The claims made should match theoretical and experimental results, and reflect how much the results can be expected to generalize to other settings. 
        \item It is fine to include aspirational goals as motivation as long as it is clear that these goals are not attained by the paper. 
    \end{itemize}

\item {\bf Limitations}
    \item[] Question: Does the paper discuss the limitations of the work performed by the authors?
    \item[] Answer: \answerYes{} 
    \item[] Justification: We show that our techniques are somewhat more computationally expensive than the prior work in the local DP setting and suffer some accuraccy loss due to the privacy guarantee.
    \item[] Guidelines:
    \begin{itemize}
        \item The answer NA means that the paper has no limitation while the answer No means that the paper has limitations, but those are not discussed in the paper. 
        \item The authors are encouraged to create a separate "Limitations" section in their paper.
        \item The paper should point out any strong assumptions and how robust the results are to violations of these assumptions (e.g., independence assumptions, noiseless settings, model well-specification, asymptotic approximations only holding locally). The authors should reflect on how these assumptions might be violated in practice and what the implications would be.
        \item The authors should reflect on the scope of the claims made, e.g., if the approach was only tested on a few datasets or with a few runs. In general, empirical results often depend on implicit assumptions, which should be articulated.
        \item The authors should reflect on the factors that influence the performance of the approach. For example, a facial recognition algorithm may perform poorly when image resolution is low or images are taken in low lighting. Or a speech-to-text system might not be used reliably to provide closed captions for online lectures because it fails to handle technical jargon.
        \item The authors should discuss the computational efficiency of the proposed algorithms and how they scale with dataset size.
        \item If applicable, the authors should discuss possible limitations of their approach to address problems of privacy and fairness.
        \item While the authors might fear that complete honesty about limitations might be used by reviewers as grounds for rejection, a worse outcome might be that reviewers discover limitations that aren't acknowledged in the paper. The authors should use their best judgment and recognize that individual actions in favor of transparency play an important role in developing norms that preserve the integrity of the community. Reviewers will be specifically instructed to not penalize honesty concerning limitations.
    \end{itemize}

\item {\bf Theory Assumptions and Proofs}
    \item[] Question: For each theoretical result, does the paper provide the full set of assumptions and a complete (and correct) proof?
    \item[] Answer: \answerYes{} 
    \item[] Justification: Yes we clearly state assumptions and provide complete proofs.
    \item[] Guidelines:
    \begin{itemize}
        \item The answer NA means that the paper does not include theoretical results. 
        \item All the theorems, formulas, and proofs in the paper should be numbered and cross-referenced.
        \item All assumptions should be clearly stated or referenced in the statement of any theorems.
        \item The proofs can either appear in the main paper or the supplemental material, but if they appear in the supplemental material, the authors are encouraged to provide a short proof sketch to provide intuition. 
        \item Inversely, any informal proof provided in the core of the paper should be complemented by formal proofs provided in appendix or supplemental material.
        \item Theorems and Lemmas that the proof relies upon should be properly referenced. 
    \end{itemize}

    \item {\bf Experimental Result Reproducibility}
    \item[] Question: Does the paper fully disclose all the information needed to reproduce the main experimental results of the paper to the extent that it affects the main claims and/or conclusions of the paper (regardless of whether the code and data are provided or not)?
    \item[] Answer: \answerYes{} 
    \item[] Justification: We provide pseudocode for our algorithms and describe datasets, models, and hyperparameters used, as well as our instantiations of the matrix mechanism.
    We also provide the code.
    \item[] Guidelines:
    \begin{itemize}
        \item The answer NA means that the paper does not include experiments.
        \item If the paper includes experiments, a No answer to this question will not be perceived well by the reviewers: Making the paper reproducible is important, regardless of whether the code and data are provided or not.
        \item If the contribution is a dataset and/or model, the authors should describe the steps taken to make their results reproducible or verifiable. 
        \item Depending on the contribution, reproducibility can be accomplished in various ways. For example, if the contribution is a novel architecture, describing the architecture fully might suffice, or if the contribution is a specific model and empirical evaluation, it may be necessary to either make it possible for others to replicate the model with the same dataset, or provide access to the model. In general. releasing code and data is often one good way to accomplish this, but reproducibility can also be provided via detailed instructions for how to replicate the results, access to a hosted model (e.g., in the case of a large language model), releasing of a model checkpoint, or other means that are appropriate to the research performed.
        \item While NeurIPS does not require releasing code, the conference does require all submissions to provide some reasonable avenue for reproducibility, which may depend on the nature of the contribution. For example
        \begin{enumerate}
            \item If the contribution is primarily a new algorithm, the paper should make it clear how to reproduce that algorithm.
            \item If the contribution is primarily a new model architecture, the paper should describe the architecture clearly and fully.
            \item If the contribution is a new model (e.g., a large language model), then there should either be a way to access this model for reproducing the results or a way to reproduce the model (e.g., with an open-source dataset or instructions for how to construct the dataset).
            \item We recognize that reproducibility may be tricky in some cases, in which case authors are welcome to describe the particular way they provide for reproducibility. In the case of closed-source models, it may be that access to the model is limited in some way (e.g., to registered users), but it should be possible for other researchers to have some path to reproducing or verifying the results.
        \end{enumerate}
    \end{itemize}

\item {\bf Open access to data and code}
    \item[] Question: Does the paper provide open access to the data and code, with sufficient instructions to faithfully reproduce the main experimental results, as described in supplemental material?
    \item[] Answer: \answerYes{} 
    \item[] Justification: Yes we include all of the above.
    \item[] Guidelines:
    \begin{itemize}
        \item The answer NA means that paper does not include experiments requiring code.
        \item Please see the NeurIPS code and data submission guidelines (\url{https://nips.cc/public/guides/CodeSubmissionPolicy}) for more details.
        \item While we encourage the release of code and data, we understand that this might not be possible, so “No” is an acceptable answer. Papers cannot be rejected simply for not including code, unless this is central to the contribution (e.g., for a new open-source benchmark).
        \item The instructions should contain the exact command and environment needed to run to reproduce the results. See the NeurIPS code and data submission guidelines (\url{https://nips.cc/public/guides/CodeSubmissionPolicy}) for more details.
        \item The authors should provide instructions on data access and preparation, including how to access the raw data, preprocessed data, intermediate data, and generated data, etc.
        \item The authors should provide scripts to reproduce all experimental results for the new proposed method and baselines. If only a subset of experiments are reproducible, they should state which ones are omitted from the script and why.
        \item At submission time, to preserve anonymity, the authors should release anonymized versions (if applicable).
        \item Providing as much information as possible in supplemental material (appended to the paper) is recommended, but including URLs to data and code is permitted.
    \end{itemize}

\item {\bf Experimental Setting/Details}
    \item[] Question: Does the paper specify all the training and test details (e.g., data splits, hyperparameters, how they were chosen, type of optimizer, etc.) necessary to understand the results?
    \item[] Answer: \answerYes{} 
    \item[] Justification: Yes, as above.
    \item[] Guidelines:
    \begin{itemize}
        \item The answer NA means that the paper does not include experiments.
        \item The experimental setting should be presented in the core of the paper to a level of detail that is necessary to appreciate the results and make sense of them.
        \item The full details can be provided either with the code, in appendix, or as supplemental material.
    \end{itemize}

\item {\bf Experiment Statistical Significance}
    \item[] Question: Does the paper report error bars suitably and correctly defined or other appropriate information about the statistical significance of the experiments?
    \item[] Answer: \answerNo{} 
    \item[] Justification: It would be too computationally expensive to do so. The models took days to train.
    \item[] Guidelines:
    \begin{itemize}
        \item The answer NA means that the paper does not include experiments.
        \item The authors should answer "Yes" if the results are accompanied by error bars, confidence intervals, or statistical significance tests, at least for the experiments that support the main claims of the paper.
        \item The factors of variability that the error bars are capturing should be clearly stated (for example, train/test split, initialization, random drawing of some parameter, or overall run with given experimental conditions).
        \item The method for calculating the error bars should be explained (closed form formula, call to a library function, bootstrap, etc.)
        \item The assumptions made should be given (e.g., Normally distributed errors).
        \item It should be clear whether the error bar is the standard deviation or the standard error of the mean.
        \item It is OK to report 1-sigma error bars, but one should state it. The authors should preferably report a 2-sigma error bar than state that they have a 96\% CI, if the hypothesis of Normality of errors is not verified.
        \item For asymmetric distributions, the authors should be careful not to show in tables or figures symmetric error bars that would yield results that are out of range (e.g. negative error rates).
        \item If error bars are reported in tables or plots, The authors should explain in the text how they were calculated and reference the corresponding figures or tables in the text.
    \end{itemize}

\item {\bf Experiments Compute Resources}
    \item[] Question: For each experiment, does the paper provide sufficient information on the computer resources (type of compute workers, memory, time of execution) needed to reproduce the experiments?
    \item[] Answer: \answerYes{} 
    \item[] Justification: Yes we provide this information
    \item[] Guidelines:
    \begin{itemize}
        \item The answer NA means that the paper does not include experiments.
        \item The paper should indicate the type of compute workers CPU or GPU, internal cluster, or cloud provider, including relevant memory and storage.
        \item The paper should provide the amount of compute required for each of the individual experimental runs as well as estimate the total compute. 
        \item The paper should disclose whether the full research project required more compute than the experiments reported in the paper (e.g., preliminary or failed experiments that didn't make it into the paper). 
    \end{itemize}
    
\item {\bf Code Of Ethics}
    \item[] Question: Does the research conducted in the paper conform, in every respect, with the NeurIPS Code of Ethics \url{https://neurips.cc/public/EthicsGuidelines}?
    \item[] Answer: \answerYes{} 
    \item[] Justification: We conform to the guidelines.
    \item[] Guidelines:
    \begin{itemize}
        \item The answer NA means that the authors have not reviewed the NeurIPS Code of Ethics.
        \item If the authors answer No, they should explain the special circumstances that require a deviation from the Code of Ethics.
        \item The authors should make sure to preserve anonymity (e.g., if there is a special consideration due to laws or regulations in their jurisdiction).
    \end{itemize}

\item {\bf Broader Impacts}
    \item[] Question: Does the paper discuss both potential positive societal impacts and negative societal impacts of the work performed?
    \item[] Answer: \answerYes{} 
    \item[] Justification: We acknowledge that a tradeoff between privacy and accuracy exists
    \item[] Guidelines:
    \begin{itemize}
        \item The answer NA means that there is no societal impact of the work performed.
        \item If the authors answer NA or No, they should explain why their work has no societal impact or why the paper does not address societal impact.
        \item Examples of negative societal impacts include potential malicious or unintended uses (e.g., disinformation, generating fake profiles, surveillance), fairness considerations (e.g., deployment of technologies that could make decisions that unfairly impact specific groups), privacy considerations, and security considerations.
        \item The conference expects that many papers will be foundational research and not tied to particular applications, let alone deployments. However, if there is a direct path to any negative applications, the authors should point it out. For example, it is legitimate to point out that an improvement in the quality of generative models could be used to generate deepfakes for disinformation. On the other hand, it is not needed to point out that a generic algorithm for optimizing neural networks could enable people to train models that generate Deepfakes faster.
        \item The authors should consider possible harms that could arise when the technology is being used as intended and functioning correctly, harms that could arise when the technology is being used as intended but gives incorrect results, and harms following from (intentional or unintentional) misuse of the technology.
        \item If there are negative societal impacts, the authors could also discuss possible mitigation strategies (e.g., gated release of models, providing defenses in addition to attacks, mechanisms for monitoring misuse, mechanisms to monitor how a system learns from feedback over time, improving the efficiency and accessibility of ML).
    \end{itemize}
    
\item {\bf Safeguards}
    \item[] Question: Does the paper describe safeguards that have been put in place for responsible release of data or models that have a high risk for misuse (e.g., pretrained language models, image generators, or scraped datasets)?
    \item[] Answer: \answerNA{} 
    \item[] Justification: We use public datasets
    \item[] Guidelines:
    \begin{itemize}
        \item The answer NA means that the paper poses no such risks.
        \item Released models that have a high risk for misuse or dual-use should be released with necessary safeguards to allow for controlled use of the model, for example by requiring that users adhere to usage guidelines or restrictions to access the model or implementing safety filters. 
        \item Datasets that have been scraped from the Internet could pose safety risks. The authors should describe how they avoided releasing unsafe images.
        \item We recognize that providing effective safeguards is challenging, and many papers do not require this, but we encourage authors to take this into account and make a best faith effort.
    \end{itemize}

\item {\bf Licenses for existing assets}
    \item[] Question: Are the creators or original owners of assets (e.g., code, data, models), used in the paper, properly credited and are the license and terms of use explicitly mentioned and properly respected?
    \item[] Answer: \answerYes{} 
    \item[] Justification: We point out the datasets that we use.
    \item[] Guidelines:
    \begin{itemize}
        \item The answer NA means that the paper does not use existing assets.
        \item The authors should cite the original paper that produced the code package or dataset.
        \item The authors should state which version of the asset is used and, if possible, include a URL.
        \item The name of the license (e.g., CC-BY 4.0) should be included for each asset.
        \item For scraped data from a particular source (e.g., website), the copyright and terms of service of that source should be provided.
        \item If assets are released, the license, copyright information, and terms of use in the package should be provided. For popular datasets, \url{paperswithcode.com/datasets} has curated licenses for some datasets. Their licensing guide can help determine the license of a dataset.
        \item For existing datasets that are re-packaged, both the original license and the license of the derived asset (if it has changed) should be provided.
        \item If this information is not available online, the authors are encouraged to reach out to the asset's creators.
    \end{itemize}

\item {\bf New Assets}
    \item[] Question: Are new assets introduced in the paper well documented and is the documentation provided alongside the assets?
    \item[] Answer: \answerYes{} 
    \item[] Justification: We provide documentation and comments in the code.
    \item[] Guidelines:
    \begin{itemize}
        \item The answer NA means that the paper does not release new assets.
        \item Researchers should communicate the details of the dataset/code/model as part of their submissions via structured templates. This includes details about training, license, limitations, etc. 
        \item The paper should discuss whether and how consent was obtained from people whose asset is used.
        \item At submission time, remember to anonymize your assets (if applicable). You can either create an anonymized URL or include an anonymized zip file.
    \end{itemize}

\item {\bf Crowdsourcing and Research with Human Subjects}
    \item[] Question: For crowdsourcing experiments and research with human subjects, does the paper include the full text of instructions given to participants and screenshots, if applicable, as well as details about compensation (if any)? 
    \item[] Answer: \answerNA{} 
    \item[] Justification: Our paper did not involve this.
    \item[] Guidelines:
    \begin{itemize}
        \item The answer NA means that the paper does not involve crowdsourcing nor research with human subjects.
        \item Including this information in the supplemental material is fine, but if the main contribution of the paper involves human subjects, then as much detail as possible should be included in the main paper. 
        \item According to the NeurIPS Code of Ethics, workers involved in data collection, curation, or other labor should be paid at least the minimum wage in the country of the data collector. 
    \end{itemize}

\item {\bf Institutional Review Board (IRB) Approvals or Equivalent for Research with Human Subjects}
    \item[] Question: Does the paper describe potential risks incurred by study participants, whether such risks were disclosed to the subjects, and whether Institutional Review Board (IRB) approvals (or an equivalent approval/review based on the requirements of your country or institution) were obtained?
    \item[] Answer: \answerNA{} 
    \item[] Justification: Our paper did not invovle this.
    \item[] Guidelines:
    \begin{itemize}
        \item The answer NA means that the paper does not involve crowdsourcing nor research with human subjects.
        \item Depending on the country in which research is conducted, IRB approval (or equivalent) may be required for any human subjects research. If you obtained IRB approval, you should clearly state this in the paper. 
        \item We recognize that the procedures for this may vary significantly between institutions and locations, and we expect authors to adhere to the NeurIPS Code of Ethics and the guidelines for their institution. 
        \item For initial submissions, do not include any information that would break anonymity (if applicable), such as the institution conducting the review.
    \end{itemize}

\end{enumerate}

}{}

\end{document}